\newcommand{\tpmod}[1]{{\@displayfalse\pmod{#1}}}
\newlength\OneImW
\newlength\BigOneImW
\newlength\ThreeImW
\newlength\twofigwidth
\newlength\vfigskip
\newlength\figsep
\newtheorem{Proposition}{Proposition}
\newtheorem{Corollary}{Corollary}
\newtheorem{Property}{Property}
\newtheorem{Definition}{Definition}
\newcommand{\X}[0]{{\bf X}}
\newcommand{\Y}[0]{{\bf Y}}
\newcommand{\s}[0]{{\bf S}}
\newcommand{\x}[0]{{\bf x}}
\newcommand{\BC}[0]{\overline{\mathbb{Y}}}
\newcommand{\CEIL}[1]{\left\lceil{#1}\right\rceil}
\begin{document}

\title{The Graph Structure of Baker's maps Implemented on a Computer}
\author{Chengqing Li, Kai Tan
\thanks{This work was supported by the National Natural Science Foundation of China (no.~92267102, 61772447).}

\thanks{C. Li and K. Tan are with the School of Computer Science, Xiangtan University, Xiangtan 411105, Hunan, China.}
}

\markboth{IEEE Transactions}{Li \MakeLowercase{et al.}}
\IEEEpubid{\begin{minipage}{\textwidth}\ \\[12pt] \centering
			1520-9210 \copyright 2020 IEEE. Personal use is permitted, but republication/redistribution requires IEEE permission.\\
			See https://www.ieee.org/publications/rights/index.html for more information. \\
\today
\end{minipage}
}

\maketitle

\begin{abstract}
The complex dynamics of baker's map and its variants in an infinite-precision mathematical domain have been extensively analyzed in the past five decades. However, their real structure implemented in a finite-precision computer remains unclear.
This paper gives an explicit formulation for the quantized baker's map and its extension into higher dimensions.
Our study reveals certain properties, such as the in-degree distribution in the state-mapping network approaching a constant with increasing precision, and a consistent maximum in-degree across various levels of fixed-point arithmetic precision. We also observe a fractal pattern in baker's map functional graph as precision increases, characterized by fractal dimensions.
We then thoroughly examine the structural nuances of functional graphs created by the higher-dimensional baker's map (HDBM) in both fixed-point and floating-point arithmetic domains. An interesting aspect of our study is 
the use of interval arithmetic to establish a relationship between the HDBM's functional graphs across these two computational domains.
A particularly intriguing discovery is the emergence of a semi-fractal pattern within the functional graph of a specific baker's map variant, observed as the precision is incrementally increased. 
The insights gained from our research offer a foundational understanding for the dynamic analysis and application of baker's map and its variants in various
domains.
\end{abstract}
\begin{IEEEkeywords}
baker's map, cycle structure, chaotic cryptography, fixed-point format, floating-point arithmetic domain, period distribution, pseudorandom number sequence.
\end{IEEEkeywords}

\section{Introduction}

\IEEEPARstart{I}{n} the past two decades, chaotic maps have gained significant attention in various fields, including 
secure communication \cite{cqli:Diode:TCASI19, Mehallel:Enhancement:2021,cqli:HNN:TCASI2024},
quantum chaos \cite{Arul:PRE:2019, Clauss:Universal:2021, Mudute-Ndumbe:non-hermitianpt:2020, Hou:Quantum:2020},
image encryption \cite{Fridrich:ChaoticImageEncryption:IJBC98, Alvarez:Baker:PLA2006}, and
pseudorandom number generator \cite{Chen:logistic:TCASII10,garcia2018chaos:TIM18,wangqx:higher:TCASI21,Galias:Henon:TCASI22, chen:PRNS:TC22, licq:logistic:IJBC2023}.
baker's map, renowned for its simulation of dough kneading, stands out as a prominent example \cite{chen:Baker:IJBCC04}.
Due to the seemingly strong cryptographical potential of complex dynamics of chaotic maps, it is used as a source to produce pseudorandom number sequence for controlling basic encryption operations and their combination
\cite{Machado:Cryptography:2004}. In \cite{Mehallel:Enhancement:2021}, the discrete baker's map is used to design chaotic interleaving algorithms,
which 
rearrange the positions of image pixels and mitigate transmission errors.
As a classical model of chaotic resonance, baker's map has been analyzed through the statistics of its single resonance state \cite{Clauss:Universal:2021}.
Notably, in 1989, the quantum version of baker’s map was experimentally implemented on a three-bit quantum information processor using nuclear magnetic resonance \cite{PhysRevLett.89.157902}.
The release of IBM's first circuit-based commercial quantum computer in January 2019 marked a significant advancement. It further encouraged the use of discrete baker's mapping for the efficient switching of quantum bits with minimal circuit complexity.

Baker's map is a bijection from the unit square $[0, 1)\times [0, 1)$ onto itself and defined by
\begin{equation}
\label{eq:Baker}
\mathrm{B}(x, y)=
 \begin{cases}
 (2x, y/2) & \text{if } 0 \le x <1/2; \\
 (2x-1, y/2+1/2) & \text{if } 1/2\le x < 1.
 \end{cases}
\end{equation}
As shown in Fig.~\ref{Fig:GBxy}, baker's map is generalized by dividing a square into $k$ strips, $[F_i, F_{i+1})\times[0, 1)$, and mapping each vertical strip to a horizontal one $[0, 1)\times[F_i, F_{i+1})$.
Mathematically, the generalized baker's map can be presented as
\begin{equation}
\label{eq:Gbaker}
\mathrm{B}(x, y)=\left( \frac{1}{p_n}(x-F_n), p_n\cdot y+F_n \right),
\end{equation}
where $1\leq n \leq k$ and $(x, y)\in [F_n, F_{n+1})\times [0, 1)$. 

The fundamental properties of baker's map~\eqref{eq:Baker} are disclosed in Arnold's classic monograph \cite{Arnold:WAB:1968}.
In \cite{Schack:Baker:PRL92}, much more on dynamics of the continuous baker's map demonstrated under various conditions were extensively investigated, e.g.
hypersensitivity to perturbation. 
There are other ways to extend baker's map \cite{chen:Baker:IJBCC04,YaobinMao:CSF2004}. 
Because of the subtle similarity between the
properties of the generalized baker's map~\eqref{eq:Gbaker}
and that of a cryptosystem, it is further extended by Bernoulli permutation to construct a symmetric product cipher and a pseudo-random number generator \cite{Scharinger:Baker:1995}.
In \cite{Ozturk:Baker:TCASI2019}, it was extended to a higher-dimensional version called higher-dimensional baker's map (HDBM), exhibiting Devaney chaos in every dimension.

\begin{figure}[!htb]
\centering
 \begin{minipage}{0.6\ThreeImW}
 \includegraphics[width=0.6\ThreeImW]{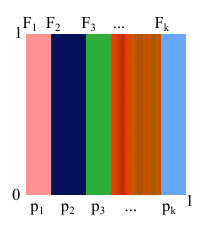}
 \end{minipage}
 \begin{minipage}{0.3\ThreeImW}
 \includegraphics[width=0.3\ThreeImW]{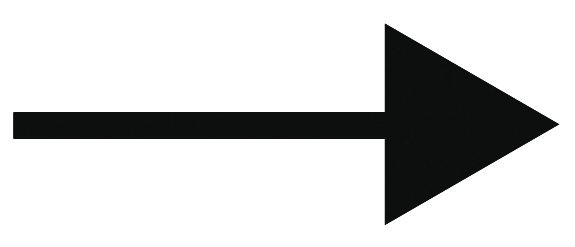}
 \end{minipage}
 \begin{minipage}{0.6\ThreeImW}
 \includegraphics[width=0.6\ThreeImW]{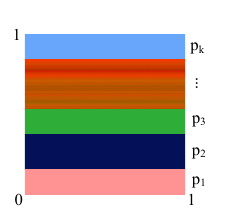}
 \end{minipage}
\caption{Demonstration of the generalized baker's map~(\ref{eq:Gbaker}).}
\label{Fig:GBxy}
\end{figure}

\IEEEpubidadjcol 

When a chaotic map is implemented on a digital device or simulated via a software suite, its dynamical properties may degrade or even diminish \cite{Persohn:CSF:2012, Boghosian:Pathology:ATS19}.
The randomness of the sequence generated by iterating a chaotic map on a digital device is much lower than expected.
In 2006, the insecurity problem of an encryption scheme caused by the dynamical degradation of baker's map implemented with computer number format Binary32 or Binary64 was studied in \cite{Alvarez:Baker:PLA2006}.
Then, some methods are designed for implementing chaotic systems on digital platforms, such as counteracting dynamical degradation of chaotic systems in an integer domain
with true random external stimulus \cite{wangqx:Theoretical:TCASI16} and filling the least-significant bits (LSB) of each state with some random bits \cite{Ozturk:Baker:TCASI2019, OZTURK2018395}.
Recently, we found that a functional graph, also named state-mapping network (SMN), can serve as a special perspective to observe
the internal structure of a chaotic map implemented in a finite-precision arithmetic domain, especially its changing rules with cumulative increment of the arithmetic precision. Some new intrinsic properties of two 1-D chaotic maps, Logistic map \cite{Li:Network:ISCAS2019}, Tent map \cite{cqli:network:TCASI2019},
and a 2-D Cat map \cite{cqli:Cat:TC22}, were unveiled. In addition, the functional graph composed by iterating a Chebyshev polynomial from every possible initial
state demonstrates very strong regularity with the increase of the fixed-point arithmetic precision \cite{Li:Network:CCC2021}.
Some specific floating-point formats have been adopted for some real applications,
such as parallel computation
\cite{Nathalie:NumericalRe:TC2014}.
Fortunately, a chaotic map implemented in any floating-point arithmetic domain can be studied via the functional graph. Moreover, the properties demonstrated over different domains are similar to different extents. So, this work adopts the IEEE 754 standard floating-point format as the implementation environment.

As for a 2-D chaotic map, the key study object becomes the complex relations between node $(i, j)$ in a fixed-point arithmetic domain $\mathbb{Z}_{2^e}$ and
four nodes $(2i, 2j)$, $(2i+1, 2j)$, $(2i, 2j+1)$ and $(2i+1, 2j+1)$ in $\mathbb{Z}_{2^{e+1}}$ \cite{cqli:Cat:TC22}.
For a higher-dimension chaotic map, it just needs to represent the corresponding nodes by a high-dimensional vector.
This paper proved that the maximum in-degree of the functional graph of baker's map implemented in any fixed-precision domain remains unchanged.
Especially, the structure of the functional graph constructed by 2-D baker's map with binary parameters demonstrates strong regular patterns.
Then, the dynamical properties of HDBM (the higher-dimension version of baker's map) in a digital domain are disclosed by studying its functional graph.
Finally, the relation between functional graphs of HDBM implemented in a fixed-point arithmetic domain and the corresponding floating-point arithmetic domain is revealed.

The rest of the paper is organized as follows. Section~\ref{sec:graph} analyzes the graph structure of the generalized baker's map in a fixed-point arithmetic domain. Section~\ref{sec:HDBM} discloses the property of the functional graph composed by a higher-dimensional baker's map in the fixed-point arithmetic domain and floating-point arithmetic domain and the relation between functional graphs in the two domains.
The last section concludes the paper.

\section{The graph structure of the generalized baker's map in a fixed-point arithmetic domain}
\label{sec:graph}

The generalized baker's map $\mathrm{B}(x, y)$: $[0, 1)^2 \rightarrow [0, 1)^2$ is a crucial example in dynamical systems, particularly in chaos theory and statistical mechanics. It meets Devaney's criteria for chaos, exhibiting sensitivity to initial conditions, topological mixing, and dense periodic orbits. As a result, it features unstable periodic orbits (UPOs) due to its chaotic nature. Although the map is deterministic and periodic, the stretching and folding process results in periodic orbits that are highly sensitive to initial conditions. Moreover, as a mixing and ergodic system, the trajectories of almost all points cover the entire area $[0, 1)^2$, leading to a uniform distribution over sufficiently long iterations.
A fundamental aspect of these dynamical properties is that the Lebesgue measure on $[0, 1)^2$ is invariant under the generalized baker's map, which acts as a measure-preserving diffeomorphism on the torus $\{(x,y)\bmod 1\}$. 
This invariance implies that the distribution of points in the phase space remains consistent over time, facilitating the analysis of long-term behavior and the emergence of chaotic dynamics. However, in the digital domain, both the domain and range become discrete sets. The measure of a discrete set is typically defined using counting measure, which implies that the map loses its measure-preserving property.

Assume the generalized baker's map is implemented in a digital domain with fixed-point precision $e$.
So, its domain and range are both discrete set
$\{(x_1, x_2) \mid x_1=\frac{X_1}{2^e}, x_2=\frac{X_2}{2^e}, X_i\in\mathbb{Z}_{2^e}\}$.
Its associate functional graph $\mathbb{F}_e$ can be built by the following way:
every possible state is viewed as a unique node;
node $\X=(X_1, X_2)$ is directly linked to $\Y=(Y_1, Y_2)$ if and only if $\X$ is the preimage of $\Y$ under the generalized baker's map,
namely $\Y=\mathrm{R}\left(2^e \cdot \mathrm{B}\left(\frac{X_1}{2^e}, \frac{X_2}{2^e}\right)\right)$,
where 
$\mathrm{R}(\cdot)$ is a quantization function.
Figure~\ref{fig:perioddistributionof212} depicts $\mathbb{F}_e$ corresponding to the generalized baker's map with $(p_1, p_2, p_3) = (\frac{1}{2}, \frac{1}{2}, \frac{1}{3})$ when increasing $e$ from two to four. In addition, some important notations  referenced multiple times are defined in Table~\ref{tab:list:symbol} for enhanced readability.

\setlength{\tabcolsep}{3pt}
\begin{table}[!htb]
\centering
\caption{Nomenclature}
\begin{tabular}{c|l}
\hline
\makecell[c]{\bf Symbol} & \makecell[c]{\bf Definition}  \\ \hline
$\mathbb{F}_{e}$ & \makecell[l]{The functional graph of the generalized Baker's map which is \\ implemented in a digital domain with fixed-point precision $e$.} \\\hline
$\s$ & The set of all pre-images of node $\Y$.  \\ \hline
$\s_n$ & \makecell[l]{The set of all pre-images of node $\Y$ whose x-coordinate falls \\ within the interval $[E_n, E_{n+1})$.} \\ \hline
$X_{2, \inf}$ & The infimum of the y-coordinate of all nodes in $\s_n$. 
 \\ \hline
$X_{2, \sup}$ & The supremum of the y-coordinate of all nodes in $\s_n$. \\ \hline
$|I|$ & The cardinality of the interval $I$ \\ \hline
$|\s|$ & The cardinality of set $\s$\\ \hline
\end{tabular}
\label{tab:list:symbol}
\end{table}

To facilitate the following discussion, the quantized generalized baker's map is expressed as
$\Y=\mathrm{B}_e(\X)$,
where
\begin{equation}
\label{eq:GBakerp}
\mathrm{B}_e(\X)= \left(\mathrm{R}((X_1-E_n)/p_n), \mathrm{R}\big(p_n\cdot X_2+E_n \big) \right),
\end{equation}
$E_n=2^e\cdot F_n$ and $X_1\in [E_n, E_{n+1})$ in $\mathbb{F}_e$. 
The last quantization operation can be presented as
$\Y=\mathrm{R}(\bf Y')$,
where
\begin{equation}
\label{eq:Y'}
 \Y'=2^e \cdot \mathrm{B}\left(\frac{X_1}{2^e}, \frac{X_2}{2^e} \right).
\end{equation}
Note that $p_n$ is well-determined in the digital domain with a given precision $e$: 
\begin{equation}
\label{eq:diffE_n}
E_{n+1}-E_n=2^e\cdot (F_{n+1} - F_n)=2^e\cdot p_n.
\end{equation}
As the found rules are similar for different quantization functions, we adopt
the floor function as $\mathrm{R}(\cdot)$ throughout this paper unless otherwise specified. 

From Eq.~\eqref{eq:GBakerp} and Eq.~\eqref{eq:Y'}, one can see that there may exist multiple preimages for a given node.
Thus, let $\s$ be the set of all preimages of a node $\Y$ under the quantized generalized baker's map, namely
\begin{equation*}
\label{eq:SY}
\s=\{ \X\ \vert\ \mathrm{B}_e(\X)=\Y \}.
\end{equation*}
In particular, let $\s_n$ be the set of all preimages of a non-leaf node whose first coordinate belongs to the $n$-th interval $[E_n, E_{n+1})$,
namely
\begin{equation}
\label{eq:SYs}
\s_n=\{\X\ \vert\ \mathrm{B}_e(\X)=\Y, \X \in [E_n, E_{n+1})\times [0, 2^e)\}.
\end{equation}
Based on Properties~\ref{le:X1Y1i} and~\ref{prop:X_2} on two coordinates of node $\X$ in $\s_n$,
Proposition~\ref{prop:2n*d} reveals the value range of the in-degree of any non-leaf node in $\mathbb{F}_e$ of baker's map~(\ref{eq:Gbaker}), which is determined by its control parameters.

\begin{figure}[!htb]
\centering
\begin{minipage}[t]{0.6\twofigwidth}
\centering
		\raisebox{0.8em}{
	 \includegraphics[width=0.6\twofigwidth]{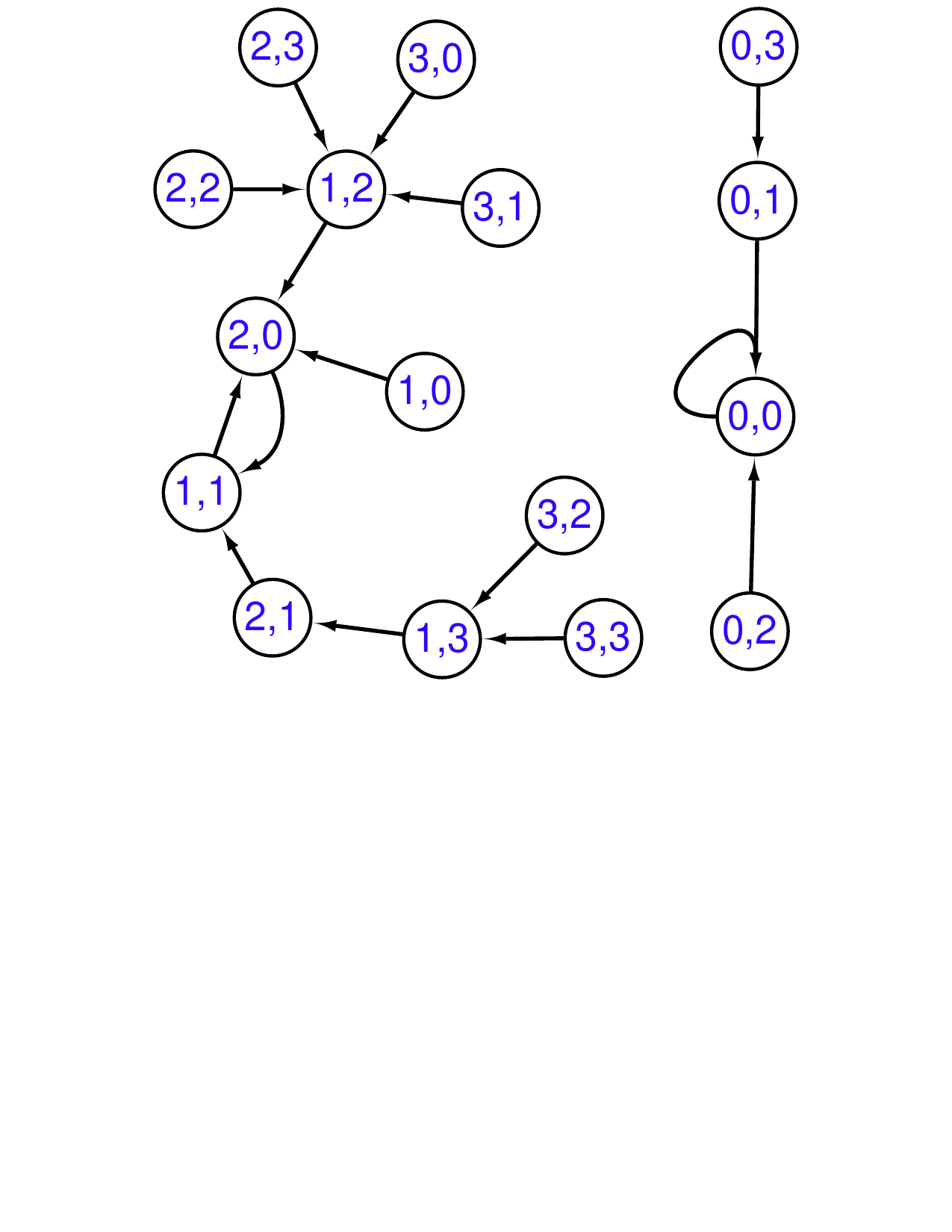}}
a)
	\end{minipage}
\hspace{2em}
 \begin{minipage}[t]{0.8\twofigwidth}
 \centering
 \includegraphics[width=0.8\twofigwidth]{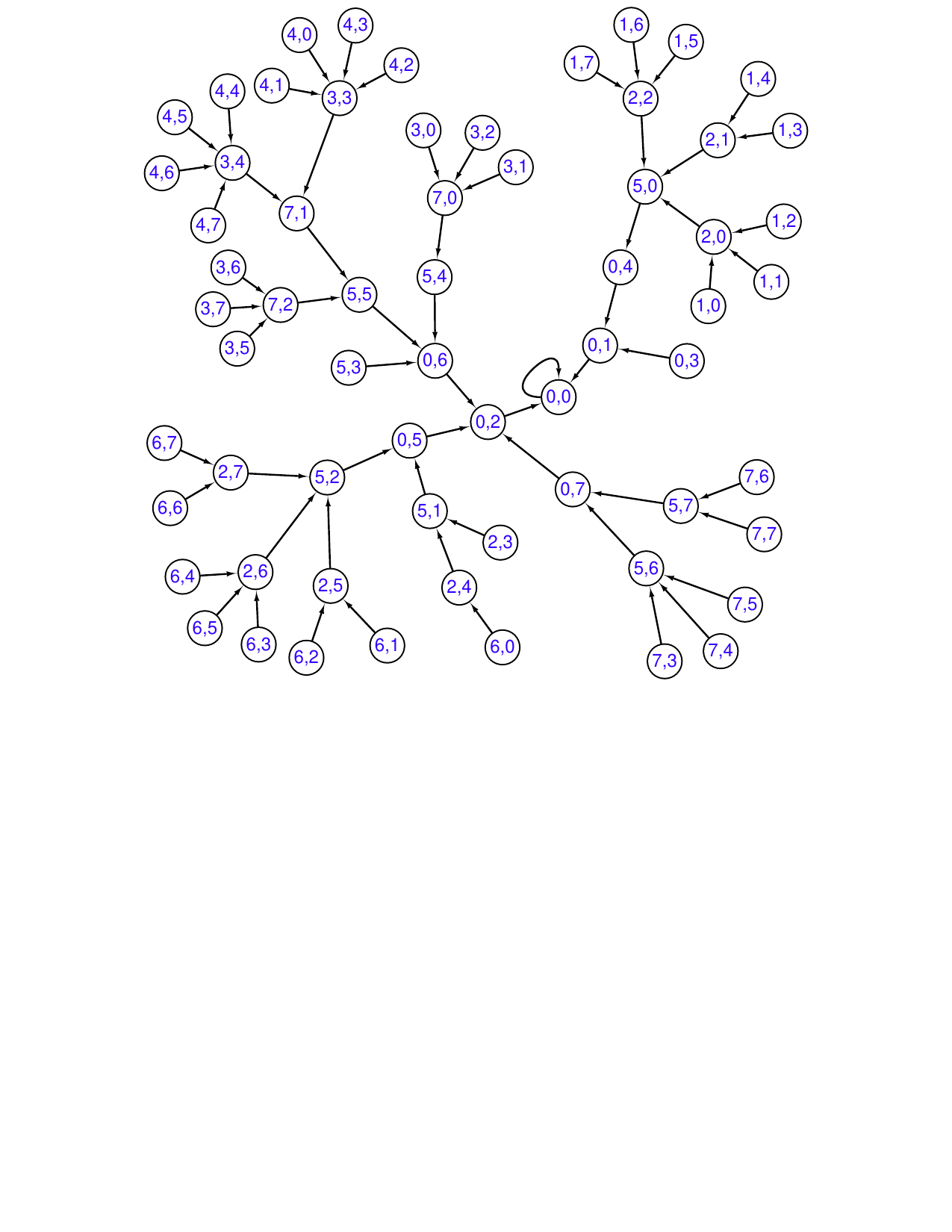}	
 b)
 \end{minipage}\vspace{0.5em}
\begin{minipage}[t]{1.4\twofigwidth}
\centering \includegraphics[width=1.4\twofigwidth]{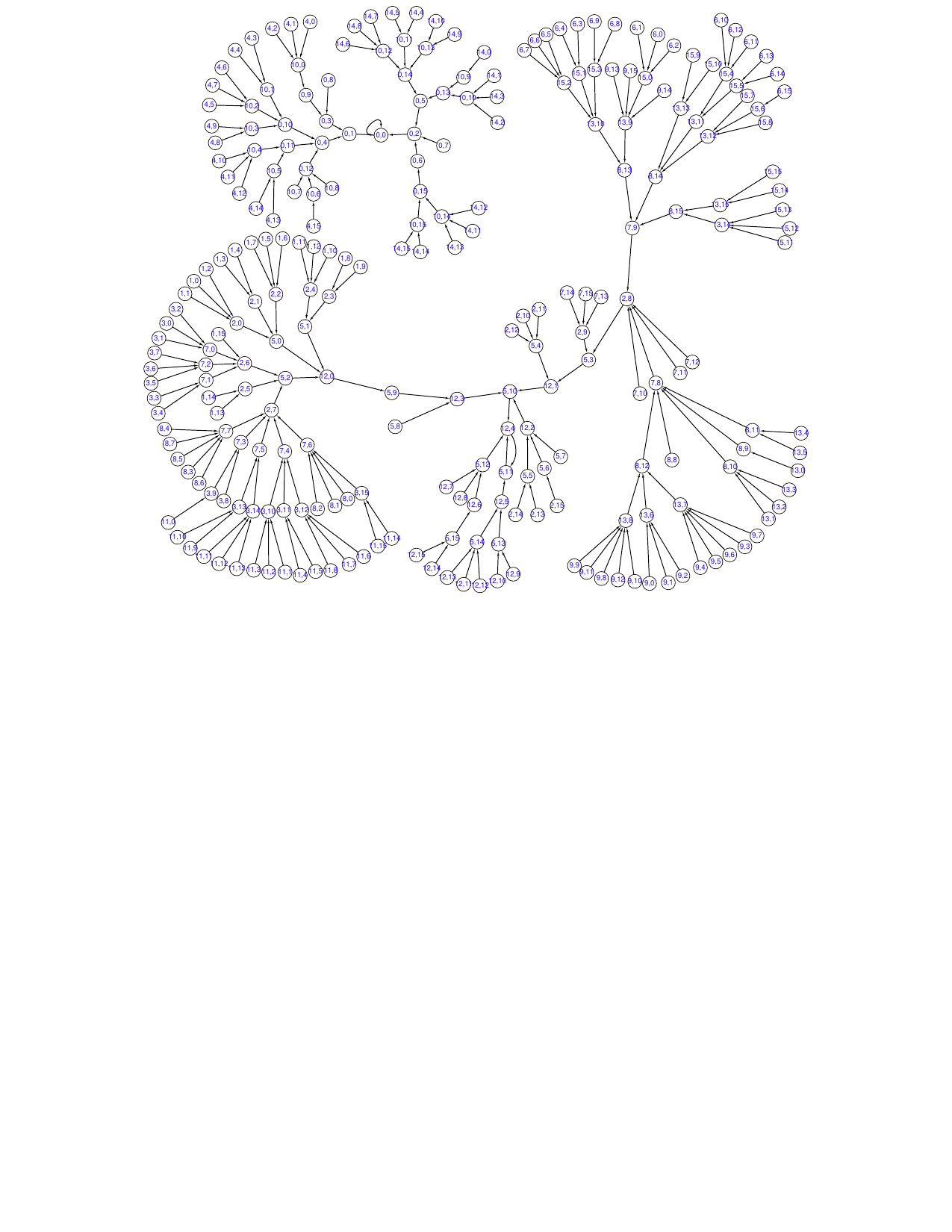}
c)
\end{minipage}
\caption{Functional graphs of the generalized baker's map~(\ref{eq:Gbaker}) with
$(p_1, p_2, p_3)=(\frac{2}{5}, \frac{1}{5}, \frac{2}{5})$ implemented with
precision $e$: a) $e=2$; b) $e=3$; c) $e=4$.}
\label{fig:perioddistributionof212}
\end{figure}

\begin{Property}
\label{le:X1Y1i}
For any node $\Y\in [0, 2^e)\times [E_n, E_{n+1})$ in the functional graph of baker's map~(\ref{eq:Gbaker}),
the cardinality of set
\begin{equation}
\label{set:X1}
\left\{ X_1 \Big\vert R\left(\frac{1}{p_{n}} (X_{1}-E_{n})\right)=Y_1, X_1\in [E_{n}, E_{n+1}) \right\}
\end{equation}
is equal to zero or one. If the cardinality is equal to one, $\Y$ is a non-leaf node.
\end{Property}
\begin{proof}
Assume the cardinality of set~(\ref{set:X1}) is
greater than one. 
There are two different elements $X_{1, i}$ and $X_{1, j}$ in set~(\ref{set:X1}) satisfying $X_{1, i} > X_{1, j}$. Thus 
$R\left(\frac{1}{p_{n}} (X_{1}-E_{n})\right)=R\left(\frac{1}{p_{n}} (X_{1}-E_{n})\right)$, that is, 
$\frac{1}{p_{n}} (X_{1, i} -E_n)-\frac{1}{p_{n}} (X_{1, j} -E_{n})=\frac{X_{1, i} -X_{1, j}}{p_{n}} < 1$.
Because $p_{n}<1$, one can obtain $X_{1, i} -X_{1, j} < 1$, which contradicts to that $X_{1, i}, X_{1, j}\in \mathbb{Z}$ are two different elements. Thus, all elements in set~(\ref{set:X1}) are the same and its cardinality equals zero or one. If the cardinality is equal to one, there is $X_1$ satisfying 
\begin{equation}\label{eq:RY1}
    R\left(\frac{1}{p_{n}} (X_{1}-E_{n})\right)=Y_1. 
\end{equation}
Next, we analyze $X_2$.
Since $\frac{Y_2+1-E_n}{p_n}-\frac{Y_2-E_n}{p_n}>1$ and considering the range of $X_2$, there is $X_2$ satisfying $X_2 \in [\frac{Y_2-E_n}{p_n}, \frac{Y_2+1-E_n}{p_n})$, namely $R\left(p_{n}\cdot X_2+ E_n\right)=Y_2$.
Combining Eqs.~\eqref{eq:GBakerp} and~\eqref{eq:RY1}, there is $\X$ satisfying $\Y = \mathrm{B}_e(\X)$, namely $\Y$ is non-leaf node.
\end{proof}

\begin{Property}
\label{prop:X_2}
When the cardinality of set~(\ref{eq:SYs}), $|\s_n|$, corresponding to a non-leaf node $\Y=(Y_1, Y_2)$, it satisfies
\begin{equation}
\label{eq:num:SYs}
|\s_n|\in
\left\{
\mathrm{R}\left(\frac{|I_n|}{p_n}\right),
\mathrm{R}\left(\frac{|I_n|}{p_n}\right)+1
\right\},
\end{equation}
where $|I_n|$ is the cardinality of $I_n=[Y_2, Y_2+1)\bigcap [E_n, E_{n+1})$.
\end{Property}
\begin{proof}
First, as for any $\X\in \s_n$, there is only one possible value of $X_1$ corresponding to
$\Y$ from Property~\ref{le:X1Y1i}. 
Then, according to Eq.~\eqref{eq:Y'} and $\mathrm{B}(x, y)$ is
bijective, the sufficient and necessary condition for $\X\in [E_n, E_{n+1})\times [0, 2^e)$ is $\Y' \in [0, 2^e)\times [E_n, E_{n+1})$.
So, from the above argument and the definition of $\s_n$, one has $|\s_n|=|\{X_2 \mid \mathrm{R}(p_n\cdot X_2+E_n) =Y_2, p_n\cdot X_2+E_n\in [E_n, E_{n+1})\}|$.
The equation in the previous set means $p_n\cdot X_2+E_n\in [Y_2, Y_2+1)$. Hence,
$p_n\cdot X_2+E_n\in I_n$.
Thus, one can calculate 
\begin{equation}\label{eq:SYs=X}
 |\s_n|=\bigg|\bigg\{\X\ \Big\vert 
 X_2\in \left[\frac{I^{\inf}_n-E_n}{p_n}, \frac{I^{\sup}_n-E_n}{p_n}\right),
 X_2\in\mathbf{Z} \bigg\}\bigg|, 
\end{equation}
where $I^{\inf}_n$ and $I^{\sup}_n$
are the infimum and supremum of interval $I_n$, respectively.
Let
$\s_n=\{\X_i\}^{|\s_n|}_{i=1}=\{(X_{1, i}, X_{2, i})\}^{|\s_n|}_{i=1}$
and
\begin{equation}
\label{eq:xminmax}
 \left\{
 \begin{split}
 X_{2, \inf} &= \inf \left(\{X_{2, i}\}_{i=1}^{|\s_n|} \right), \\
 X_{2, \sup} &= \sup \left(\{X_{2, i}\}_{i=1}^{|\s_n|} \right).
 \end{split}
 \right.
\end{equation}
One can further deduce that
\begin{equation}
\label{eq:X2min}
X_{2, \inf}\in \left[\frac{I^{\inf}_n-E_n}{p_n}, \frac{I^{\inf}_n-E_n}{p_n}+1\right)
\end{equation}
and
\begin{equation}
\label{eq:X2max}
X_{2, \sup}\in \left[\frac{I^{\sup}_n-E_n}{p_n}-1, \frac{I^{\sup}_n-E_n}{p_n}\right).
\end{equation}
From Eq.~\eqref{eq:SYs=X}, it follows that 
\begin{equation}\label{eq:SnXsupinf}
    |\s_n|=X_{2, \sup}-X_{2, \inf}+1, 
\end{equation}
then
\begin{equation*}
|\s_n| \in\left(\frac{|I_n|}{p_n}-1, \frac{|I_n|}{p_n}+1\right).
\end{equation*}
Considering the properties of the adopted quantization function $\mathrm{R}(\cdot)$ one can obtain 
\begin{equation*}
|\s_n| \in\left(R\left(\frac{|I_n|}{p_n}\right)-1, R\left(\frac{|I_n|}{p_n}\right)+2\right).
\end{equation*}
It follows from $|\s_n|$ is an integer that relation~\eqref{eq:num:SYs} holds.
\end{proof}

\begin{Corollary}\label{coro:d1}
Given a non-leaf node $\Y$ satisfying $[Y_2, Y_2+1)\in[E_{n}, E_{{n}+1})$, its in-degree in the functional graph of baker's map~(\ref{eq:Gbaker}), $d$, satisfies
\begin{equation*}
d\in\left\{\mathrm{R}\left(\frac{1}{p_{n}}\right) ,\mathrm{R}\left(\frac{1}{p_{n}}\right)+1\right\}.
\end{equation*}
\end{Corollary}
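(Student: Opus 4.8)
The plan is to derive this directly from Properties~\ref{le:X1Y1i} and~\ref{prop:X_2}, the only real work being to show that under the hypothesis $[Y_2, Y_2+1)\subseteq[E_n, E_{n+1})$ every preimage of $\Y$ lies in the single strip indexed by $n$, so that the in-degree $d$ coincides with $|\s_n|$ and relation~\eqref{eq:num:SYs} applies with $|I_n|=1$.

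First I would recall that the in-degree of $\Y$ equals $|\s|$, and that $\s$ decomposes as a disjoint union $\bigsqcup_{m}\s_m$ over the strips $[E_m, E_{m+1})$, since each preimage $\X$ has its first coordinate $X_1$ in exactly one such strip (these half-open strips tile $[0, 2^e)$). Hence $d=\sum_{m}|\s_m|$, and it suffices to argue that $|\s_m|=0$ for every $m\neq n$.

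Next I would pin down which strips can contribute. If $\X\in\s_m$, then by Eq.~\eqref{eq:GBakerp} its image has second coordinate $\mathrm{R}(p_m\cdot X_2+E_m)=Y_2$, which forces the pre-quantization value $p_m\cdot X_2+E_m$ into $[Y_2, Y_2+1)$ (as $\mathrm{R}$ is the floor function); and, exactly as in the proof of Property~\ref{prop:X_2}, bijectivity of $\mathrm{B}$ forces it into $[E_m, E_{m+1})$. Therefore $\s_m$ can be nonempty only if $[Y_2, Y_2+1)\cap[E_m, E_{m+1})\neq\varnothing$. Under the hypothesis $[Y_2, Y_2+1)\subseteq[E_n, E_{n+1})$, the window $[Y_2, Y_2+1)$ meets no strip other than the $n$-th one, so $\s_m=\varnothing$ for all $m\neq n$ and consequently $d=|\s_n|$.

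Finally I would evaluate $|I_n|$: since $[Y_2, Y_2+1)\subseteq[E_n, E_{n+1})$, the set $I_n=[Y_2, Y_2+1)\cap[E_n, E_{n+1})$ equals $[Y_2, Y_2+1)$, so $|I_n|=1$. Substituting $|I_n|=1$ into relation~\eqref{eq:num:SYs} of Property~\ref{prop:X_2} yields $d=|\s_n|\in\{\mathrm{R}(1/p_n),\,\mathrm{R}(1/p_n)+1\}$, as claimed. The only step demanding genuine care is the geometric observation that a non-straddling floor-window confines all preimages to a single strip; everything else is bookkeeping and a direct appeal to Property~\ref{prop:X_2}.
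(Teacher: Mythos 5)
Your proposal is correct and follows essentially the same route as the paper: both arguments reduce to showing that the containment $[Y_2, Y_2+1)\subseteq[E_n, E_{n+1})$ confines every preimage to the single strip $[E_n, E_{n+1})\times[0,2^e)$, so that $\s=\s_n$, and then apply Property~\ref{prop:X_2} with $|I_n|=1$. Your phrasing via the disjoint decomposition $\s=\bigsqcup_m \s_m$ with $\s_m=\varnothing$ for $m\neq n$ is just a slightly more explicit bookkeeping of the paper's direct argument that $\Y'\in[0,2^e)\times[E_n,E_{n+1})$ forces $\s\subseteq\s_n$.
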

\begin{proof}
    When $[Y_2, Y_2+1)\subset [E_{n}, E_{n+1})$, one has $\Y' \in [0, 2^e)\times [E_{n}, E_{{n}+1})$. 
It means $\X\in [E_{n}, E_{{n}+1})\times [0, 2^e)$. So, $\s\subset\s_{n}$. 
From the definition of sets $\s$ and $\s_{n}$, one has 
$\s_{n}\subset\s$. Thus $\s_{n}=\s$.
According to Property~\ref{prop:X_2}, in-degree $d$ satisfies
$$d=|\s|=|\s_{n}| \in \left\{\mathrm{R}\left(\frac{1}{p_{n_1}}\right), \mathrm{R}\left(\frac{1}{p_{n_1}}\right)+1\right\}.$$
\end{proof}

\begin{Corollary}
\label{coro:|Sn|}
When the cardinality of set~(\ref{eq:SYs}), $|\s_n|\neq 0$, corresponding to a non-leaf node $\Y=(Y_1, Y_2)$, it satisfies
\begin{equation}
\label{eq:num:SYsIn}
|\s_n|=
\begin{cases}
 \left\lceil \frac{|I_n|}{p_n} \right\rceil & \mbox{if } I_n=[E_{n}, Y_2+1);\\
 \mathrm{R}\left(\frac{|I_n|}{p_n}\right) & \mbox{if } I_n= [Y_2, E_{n+1});\\
 2^e & \mbox{if } I_n=[E_{n}, E_{n+1}),
\end{cases}
\end{equation}
where $\lceil\cdot \rceil$ is the ceiling function.
\end{Corollary}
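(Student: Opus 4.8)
The plan is to sharpen Property~\ref{prop:X_2} by recalling, from its proof, that $|\s_n|$ equals the number of integers $X_2$ in the half-open interval $[\,(I^{\inf}_n-E_n)/p_n,\ (I^{\sup}_n-E_n)/p_n\,)$, and that by Eq.~\eqref{eq:SnXsupinf} this count is $X_{2,\sup}-X_{2,\inf}+1$. The two-valued range in Property~\ref{prop:X_2} collapses to a single value here precisely because, in each of the three listed configurations, an endpoint of $I_n$ coincides with the integer $E_n$ or $E_{n+1}$, pinning down one or both of $X_{2,\inf}$ and $X_{2,\sup}$ exactly (the hypothesis $|\s_n|\neq 0$ guaranteeing that the pinned endpoints are actually attained). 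The arithmetic fact I would lean on is that, by Eq.~\eqref{eq:diffE_n}, $(E_n-E_n)/p_n=0$ and $(E_{n+1}-E_n)/p_n=2^e$; thus an infimum equal to $E_n$ sends the lower limit of the $X_2$-range to the integer $0$, and a supremum equal to $E_{n+1}$ sends the upper limit to the integer $2^e$.

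First I would treat $I_n=[E_n,Y_2+1)$. Here the lower limit is $0$, so $X_{2,\inf}=0$ and $|\s_n|$ is the number of integers in $[0,\,|I_n|/p_n)$, which equals $\lceil |I_n|/p_n\rceil$ whether or not $|I_n|/p_n$ is an integer, giving the first branch. Next, for $I_n=[Y_2,E_{n+1})$ the upper limit is $2^e$, so the largest admissible integer is $X_{2,\sup}=2^e-1$, and the number of integers in $[\,(Y_2-E_n)/p_n,\ 2^e)$ is $2^e-\lceil (Y_2-E_n)/p_n\rceil$. Rewriting $2^e=(E_{n+1}-E_n)/p_n$ and using $\lfloor N-x\rfloor=N-\lceil x\rceil$ for integer $N$, this equals $\lfloor |I_n|/p_n\rfloor=\mathrm{R}(|I_n|/p_n)$, the second branch. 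Finally, when $I_n=[E_n,E_{n+1})$ both endpoints are pinned, so $X_{2,\inf}=0$ and $X_{2,\sup}=2^e-1$, whence $|\s_n|=2^e$.

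The main obstacle is the ceiling-versus-floor bookkeeping, in particular explaining why the two symmetric-looking boundary cases round in opposite directions. The asymmetry is forced by the half-open nature of the $X_2$-range: in the first case the pinned endpoint is the \emph{included} lower limit $0$ while the open end sits at the top, producing a ceiling; in the second case the pinned endpoint is the \emph{excluded} upper limit $2^e$ while the open end sits at the bottom, producing a floor. I would make this transparent by writing out the count $\#\{X_2\in\mathbb{Z}\mid a\le X_2<b\}$ explicitly in each configuration rather than reusing the interval estimate of Property~\ref{prop:X_2}, since here the endpoints are exact and that estimate degenerates to an equality.
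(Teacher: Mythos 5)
Your proposal is correct and follows essentially the same route as the paper: both arguments exploit that the pinned endpoints $E_n$ and $E_{n+1}$ map to the exact integers $0$ and $2^e$ in the $X_2$-range, determine $X_{2,\inf}$ and $X_{2,\sup}$ (equivalently, count the integers in the half-open interval of Eq.~\eqref{eq:SYs=X}) case by case, and use the identity $\lceil -x\rceil=-\lfloor x\rfloor$ to convert the ceiling into $\mathrm{R}(\cdot)$ in the second branch. Your explicit remark about why the two boundary cases round in opposite directions is a nice clarification, but it is the same computation as the paper's.
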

\begin{proof}
According to Eqs.~\eqref{eq:X2min} and~\eqref{eq:X2max}, one has
\begin{equation}
\label{eq:X_2min2}
 X_{2, \inf}=
 \begin{cases}
 0 & \mbox{if } I_n\in \{[E_{n}, Y_2+1), [E_{n}, E_{n+1})\};\\
 \left\lceil\frac{I_n^{\inf}-E_n}{p_n} \right\rceil & \mbox{if } I_n= [Y_2, E_{n+1}),
 \end{cases}
\end{equation}
and
\begin{equation}
\label{eq:X_2max2}
X_{2, \sup}=
\begin{cases}
 \left\lceil\frac{|I_n|}{p_n} -1 \right\rceil & \mbox{if } I_n=[E_{n}, Y_2+1);\\
 2^e - 1 & \mbox{if } I_n\in\{ [Y_2, E_{n+1}), [E_{n}, E_{n+1})\}.
 \end{cases}
\end{equation}
If $I_n= [Y_2, E_{n+1})$, one can get
\[
\begin{split}
 \left\lceil\frac{I_n^{\inf} - E_n}{p_n} \right\rceil
 &= \left\lceil\frac{E_{n+1}-E_n+I_n^{\inf}-E_{n+1}}{p_n}\right\rceil\\
 &= 2^e+\left\lceil\frac{-|I_n|}{p_n} \right\rceil\\
 &= 2^e - \mathrm{R}\left(\frac{|I_n|}{p_n}\right).
\end{split}
\]
Then, it follows from Eqs.~\eqref{eq:X_2min2} and~\eqref{eq:X_2max2} that
\begin{multline*}
|X_{2, \sup} -X_{2, \inf}|=\\
 \begin{cases} \left\lceil\frac{|I_n|}{p_n} -1 \right\rceil & \mbox{if } I_n=[E_{n}, Y_2+1);\\
 \mathrm{R}\left(\frac{|I_n|}{p_n}\right)- 1 & \mbox{if } I_n= [Y_2, E_{n+1});\\
 2^e - 1, & \mbox{if } I_n= [E_{n}, E_{n+1}).
 \end{cases}
\end{multline*}
So, Eq.~\eqref{eq:num:SYsIn} holds from Eq.~\eqref{eq:SnXsupinf}.
\end{proof}

Given a node $\Y$, the cardinality of its corresponding $\s_n$ can be estimated by Property~\ref{prop:X_2} and Corollary~\ref{coro:|Sn|}, and the number of its preimages is $\sum_{n=1}^k |\s_n|$.
According to the relation between $[Y_2, Y_2+1)$ and $\{[E_n, E_{n+1})\}_{n=1}^{k}$, there are two possible cases of $\s_n$ corresponding to $\Y$, namely $n_1=n_2$ or $n_1<n_2$, 
where $k=2^e-1$, $Y_2\in [E_{n_1}, E_{n_1+1})$, $Y_2+1\in [E_{n_2}, E_{n_2+1})$. Figure~\ref{Fig:x3kindx} illustrates the two cases
in nodes marked ``\textit{A}" and ``\textit{B}", respectively. 
Analyzing such two cases separately, one can deduce the upper bound of the in-degree of any non-leaf node in the functional graph of baker's map~(\ref{eq:Gbaker}), as shown in Proposition~\ref{prop:2n*d}.

\begin{figure}[!htb]
\centering
\begin{minipage}{\BigOneImW}
 \includegraphics[width=\BigOneImW]{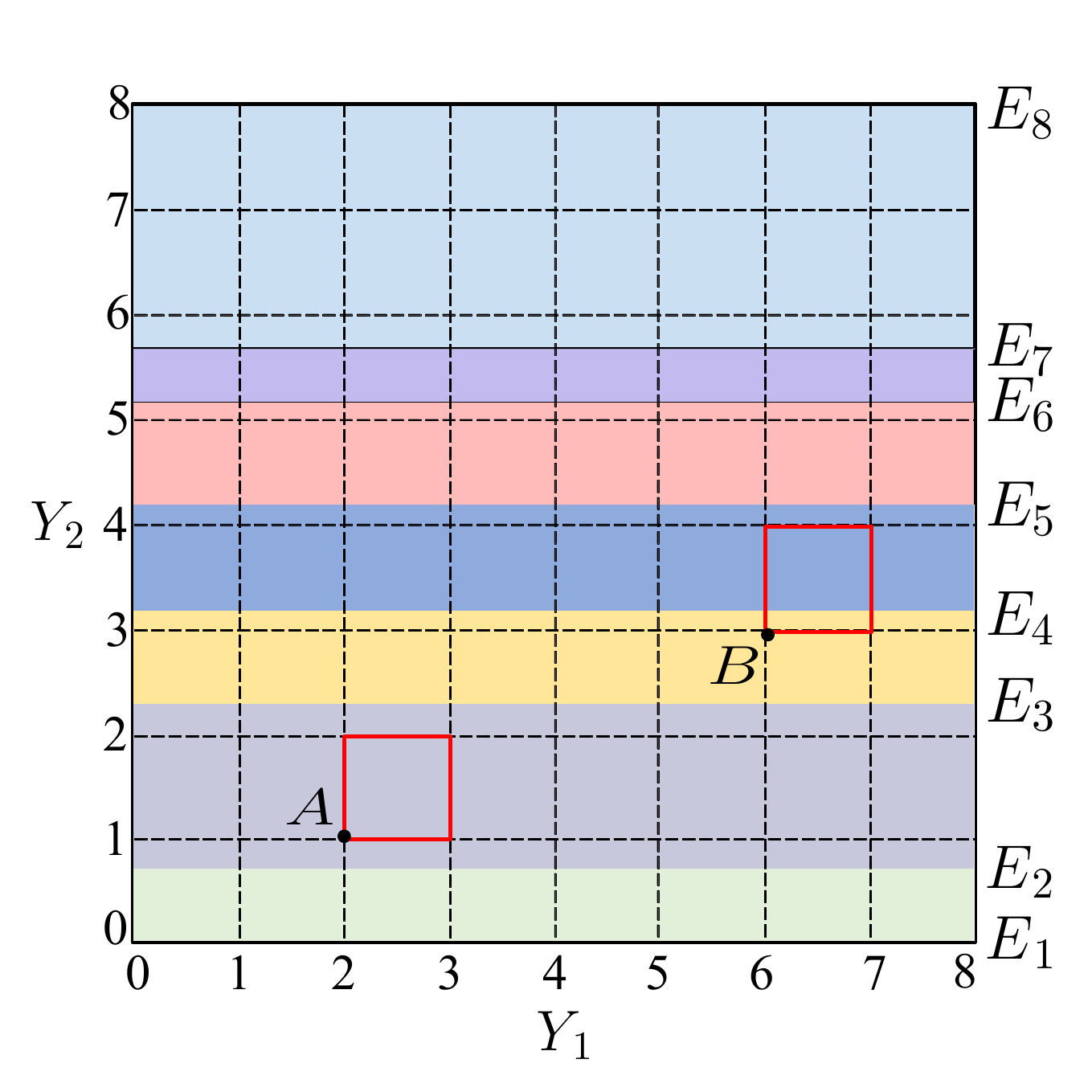}
\end{minipage}
\caption{Two possible location cases of $\Y$ with respect to $\{E_n\}_{n=1}^{k+1}$ when $e=3$.}
\label{Fig:x3kindx}
\end{figure}

\begin{figure*}[!htb]
\centering
\begin{minipage}{\ThreeImW}\hspace{1em}
\centering
\includegraphics[width=\ThreeImW]{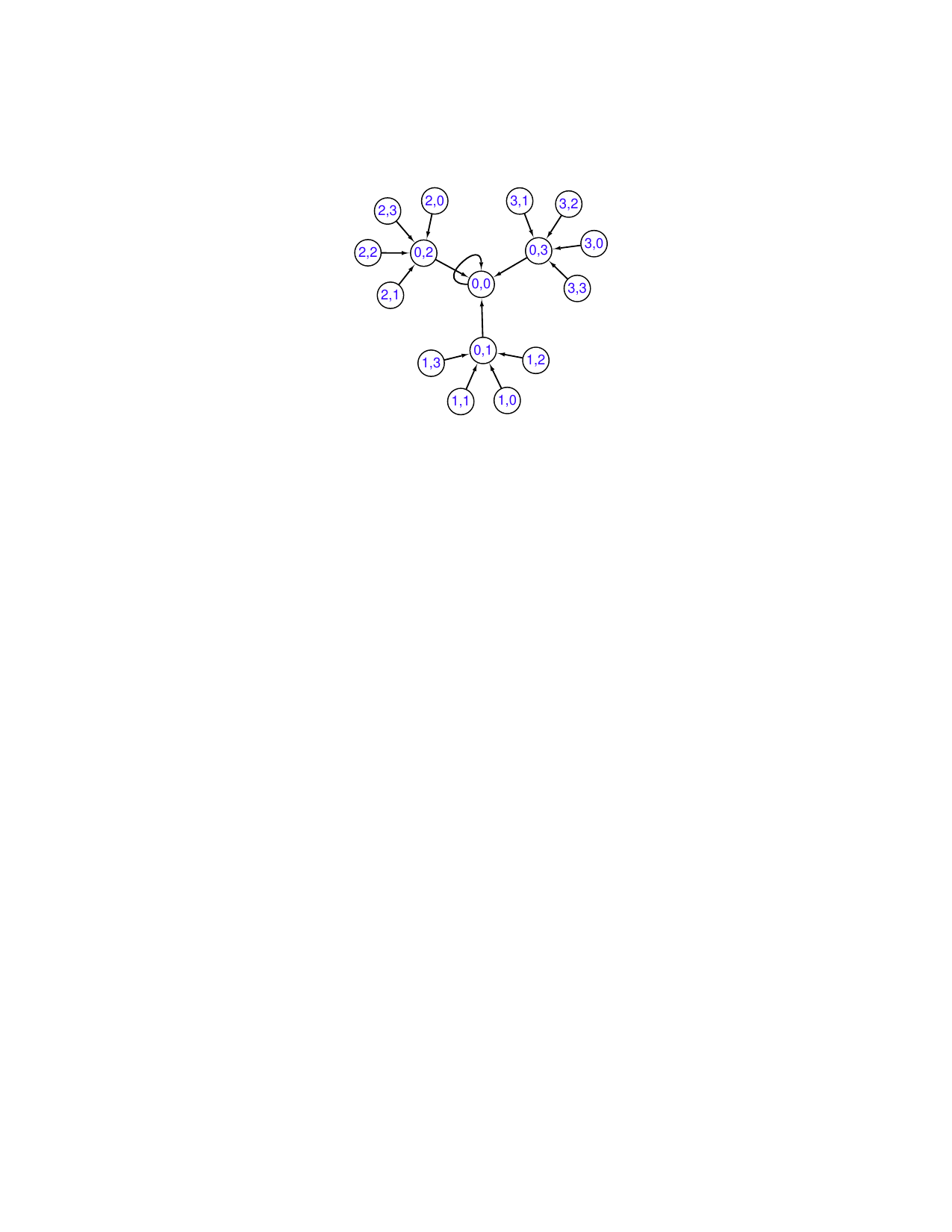}
a)
\end{minipage}
\begin{minipage}{\ThreeImW}\hspace{1em}
\centering \includegraphics[width=\ThreeImW]{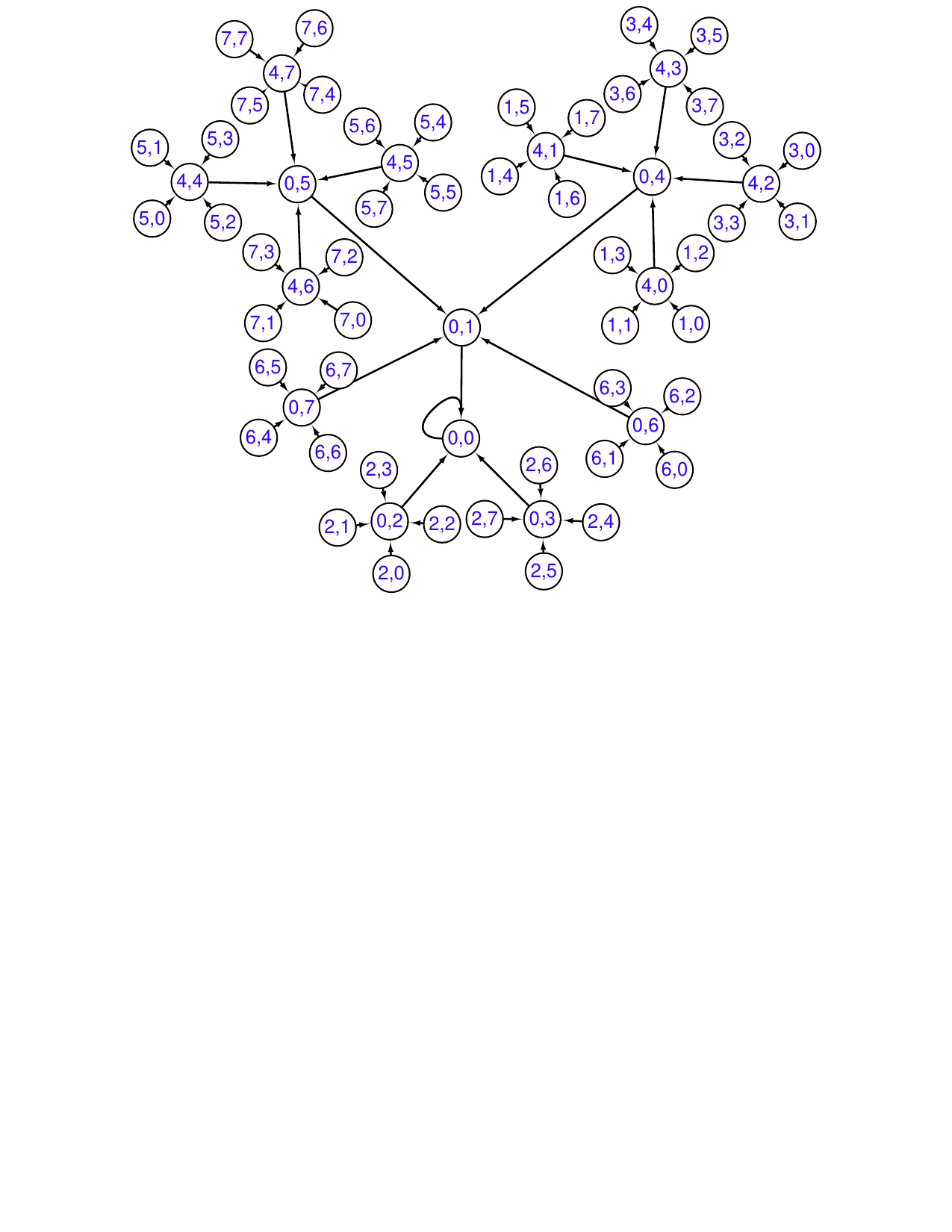}
\vspace{-1em}
b)
\end{minipage}
\begin{minipage}{\ThreeImW}
\centering \includegraphics[width=\ThreeImW]{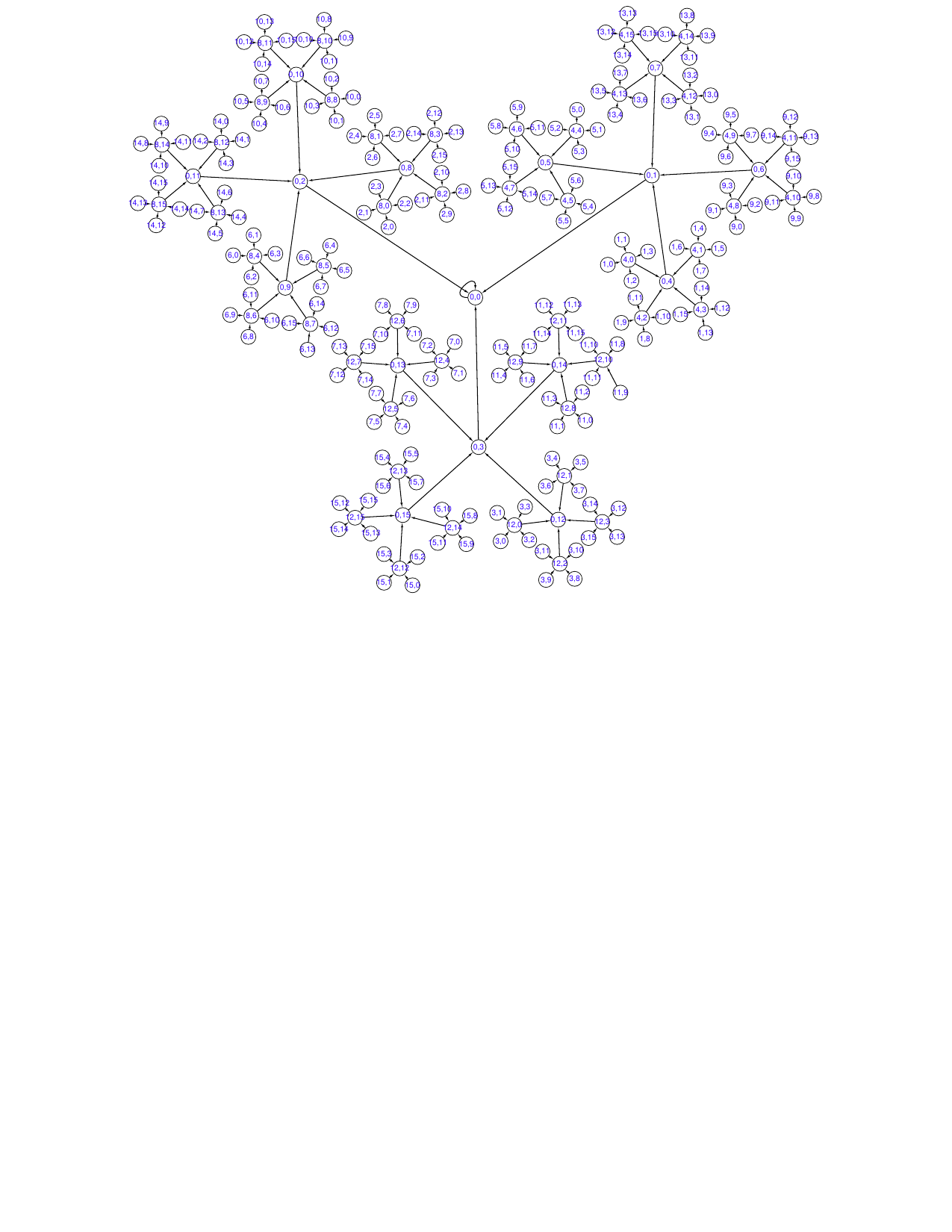}
\vspace{-0.5em}
c)
\end{minipage}
\caption{Functional graphs of the generalized baker's map with $p_n\equiv\frac{1}{4}$: a) $e=2$; b) $e=3$; c) $e=4$.}
\label{fig:perioddistributionof2222}
\end{figure*}

\begin{Proposition}
\label{prop:2n*d}
The in-degree of any non-leaf node in the functional graph of baker's map~(\ref{eq:Gbaker}), $d$, satisfies
\begin{equation}
\label{eq:prop1}
d \leq \mathrm{R}\left(\frac{1}{\min(p_1, p_2, \cdots, p_k)}\right)+1.
\end{equation}
\end{Proposition}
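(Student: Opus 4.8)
The plan is to read the in-degree $d$ of a non-leaf node $\Y=(Y_1,Y_2)$ as the total number of its preimages, $d=|\s|=\sum_{n=n_1}^{n_2}|\s_n|$, where (following the discussion preceding the statement) $Y_2\in[E_{n_1},E_{n_1+1})$ and $Y_2+1\in[E_{n_2},E_{n_2+1})$, and to split the argument along the two flagged cases $n_1=n_2$ and $n_1<n_2$. Throughout I would write $p_{\min}=\min(p_1,\dots,p_k)$ and exploit that $\mathrm{R}=\lfloor\cdot\rfloor$ is monotone.

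For the degenerate case $n_1=n_2$, the unit window $[Y_2,Y_2+1)$ sits inside the single strip $[E_{n_1},E_{n_1+1})$, so $\s=\s_{n_1}$ and Corollary~\ref{coro:d1} gives at once $d\le \mathrm{R}(1/p_{n_1})+1$. Since $p_{n_1}\ge p_{\min}$ and $\mathrm{R}$ is monotone, this is at most $\mathrm{R}(1/p_{\min})+1$, which is exactly \eqref{eq:prop1}.

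For the main case $n_1<n_2$, the boundaries $E_{n_1+1},\dots,E_{n_2}$ cut the window into the pieces $I_n=[Y_2,Y_2+1)\cap[E_n,E_{n+1})$ for $n=n_1,\dots,n_2$. The two ingredients I would combine are: the partition identity $\sum_{n=n_1}^{n_2}|I_n|=1$, valid because $[Y_2,Y_2+1)\subseteq[0,2^e)=\bigcup_n[E_n,E_{n+1})$; and Corollary~\ref{coro:|Sn|}, which evaluates each $|\s_n|$ exactly according to the shape of $I_n$. Concretely, each interior strip ($n_1<n<n_2$) has $I_n=[E_n,E_{n+1})$ and contributes $|\s_n|=2^e=|I_n|/p_n$ with no rounding slack; the left strip contributes $|\s_{n_1}|=\mathrm{R}(|I_{n_1}|/p_{n_1})\le |I_{n_1}|/p_{n_1}$; and the right strip contributes $|\s_{n_2}|=\lceil |I_{n_2}|/p_{n_2}\rceil\le |I_{n_2}|/p_{n_2}+1$. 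Summing, a single additive $1$ survives, and bounding every $1/p_n$ by $1/p_{\min}$ before applying the partition identity yields $d\le \frac{1}{p_{\min}}\sum_{n}|I_n|+1=\frac{1}{p_{\min}}+1$. To convert this real bound into the stated floor-valued one I would use that $d$ is an integer: whether or not $1/p_{\min}\in\mathbb{Z}$, the inequality $d\le 1/p_{\min}+1$ forces $d\le \lfloor 1/p_{\min}\rfloor+1=\mathrm{R}(1/p_{\min})+1$.

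The hard part will be the rounding bookkeeping rather than any single estimate. Each strip carries its own floor or ceiling, and the whole bound would collapse if one accrued a spurious $+1$ per strip—precisely the danger when a very small $p_n$ makes $[Y_2,Y_2+1)$ straddle many thin strips. The crux is therefore to notice that Corollary~\ref{coro:|Sn|} gives the interior contributions $2^e=|I_n|/p_n$ \emph{exactly} and the left contribution as a floor, so only the right strip's ceiling produces the lone additive $1$; pairing this with the partition identity $\sum_n|I_n|=1$ is what telescopes the sum to the clean bound. The concluding integrality step is then routine.
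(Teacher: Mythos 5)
Your proposal follows essentially the same route as the paper's proof: the same split into the cases $n_1=n_2$ (handled via Corollary~\ref{coro:d1}) and $n_1<n_2$ (handled by summing the exact values of $|\s_n|$ from Corollary~\ref{coro:|Sn|} over the strips meeting $[Y_2,Y_2+1)$), followed by bounding each $1/p_n$ by $1/\min_n p_n$ and an integrality step. Your version is, if anything, slightly cleaner in making the partition identity $\sum_n|I_n|=1$ and the final floor conversion explicit, but it is the same argument.
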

\begin{proof}
As for any non-leaf node $\Y=(Y_1, Y_2)$,
there exist $n_1, n_2$ such that $Y_2 \in [E_{n_1}, E_{n_1+1}), Y_2+1\in [E_{n_2}, E_{n_2+1})$.
According to the relation between $n_1$ and $n_2$, the proof is divided into the following two cases:
\begin{itemize}
\item $n_1=n_2$: this implying $[Y_2, Y_2+1)\in[E_{n_1}, E_{{n_1}+1})$. Referring to Corollary~\ref{coro:d1}, Eq.~\eqref{eq:prop1} holds.

\item $n_1<n_2$: one has $I_{n_1}=[Y_2, E_{n_1+1})$, $I_t=[E_t, E_{t+1})$ and $I_{n_2}=[E_{n_2}, Y_2+1)$, where $t\in(n_1, n_2)$.
It yields from Corollary~\ref{coro:|Sn|} that
\begin{equation*}
 |\s_{i}|=
 \begin{cases}
  \left\lceil\frac{|I_{n_2}|}{p_{n_2}} \right\rceil & \mbox{if } i=n_2; \\
 \mathrm{R}\left(\frac{|I_{n_1}|}{p_{n_1}} \right) & \mbox{if } i=n_1; \\
 2^e & \mbox{if } i \in(n_1, n_2)
 \end{cases}
\end{equation*}
when $|\s_{i}|\neq 0$
So, one has
\begin{equation*}
\begin{split}
 d &= \sum\limits_{i=n_1}^{n_2}|\s_i|\\
 &\leq \mathrm{R}\left(\frac{|I_{n_1}|}{p_{n_1}}\right)+\frac{|I_{n_1+1}|}{p_{n_1+1}}+\cdots+\left\lceil\frac{|I_{n_2}|}{p_{n_2}} \right\rceil \\
 &< \sum_{i=n_1}^{n_2}\frac{|I_i|}{p_{i}} +1.
\end{split}
\end{equation*}
\end{itemize}
As $k\geq n_2$, in both cases,
\begin{equation*}
\begin{split}
d< & \mathrm{R}\left(\frac{1}{\min(p_{n_1}, p_{n_1+1}, \cdots, p_{n_2})} \right)+1\\
 < & \mathrm{R}\left(\frac{1}{\min(p_1, p_2, \cdots, p_k)}\right)+1
 \end{split}
\end{equation*}
holds, which finishes the proof of this proposition.
\end{proof}

As depicted in Fig.~\ref{fig:perioddistributionof212} with $e=2, 3, 4$, it is evident that the largest in-degree is bounded by 
$$\mathrm{R}\left(\frac{1}{\min(\frac{2}{5}, \frac{1}{5}, \frac{2}{5})}\right)+1=6.$$
Upon comparing Fig.~\ref{fig:perioddistributionof212}a), b), and c), it is apparent that the functional graph's largest in-degree, when $e = 4$, surpasses that corresponding to $e\in {2, 3}$.
Observing the trend with increasing $e$, it is observed that the largest in-degree tends to stabilize at a certain value, accompanied by the emergence of self-similarity patterns, as depicted in Fig.~\ref{fig:perioddistributionof212}. This self-similarity is further elucidated in the in-degree distribution of the functional graph, as expounded in Proposition\ref{prop:in-degreedis1} and Corollary~\ref{coro:in-degreedis2}.

\begin{Proposition}\label{prop:in-degreedis1}
As $e$ increases, the ratio of the number of nodes in $[0,2^e)\times [E_n, E_{n+1})$ with in-degree $\mathrm{R}(\frac{1}{p_n})$ and $\mathrm{R}(\frac{1}{p_n}) + 1$ to the total number of nodes in the functional graph of baker's map~(\ref{eq:Gbaker}) approach $\left(\mathrm{R}\left(\frac{1}{p_n}\right) + 1\right)(p_n^2)-p_n$ and $p_n-p_n^2\mathrm{R}(\frac{1}{p_n})$, respectively.
\end{Proposition}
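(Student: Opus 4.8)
The plan is to pin down $N_m$ and $N_{m+1}$, the numbers of nodes in $[0,2^e)\times[E_n,E_{n+1})$ whose in-degree equals $m:=\mathrm{R}(1/p_n)$ and $m+1$ respectively, by writing two linear counting identities and solving the resulting $2\times 2$ system, then dividing by the total node count $(2^e)^2=4^e$ and letting $e\to\infty$. The whole argument rests on separating the generic \emph{interior} nodes---those with $[Y_2,Y_2+1)\subset[E_n,E_{n+1})$, for which Corollary~\ref{coro:d1} guarantees the in-degree lies in $\{m,m+1\}$---from the $O(2^e)$ \emph{boundary} nodes whose unit interval straddles a strip edge; the latter will be shown to contribute only $o(4^e)$ and thus to drop out of every ratio. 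Note first that $p_n<1$ forces $m=\mathrm{R}(1/p_n)\ge 1$, so the two classes are nonempty, disjoint, and both consist of genuine non-leaf nodes.

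The first identity counts the non-leaf nodes. By Property~\ref{le:X1Y1i} the assignment $X_1\mapsto Y_1=\mathrm{R}((X_1-E_n)/p_n)$ is injective on the integers $X_1\in[E_n,E_{n+1})$, so the number of first coordinates $Y_1$ admitting a preimage equals the number of such integers, namely $E_{n+1}-E_n+O(1)=2^e p_n+O(1)$, and this set of admissible $Y_1$ is the same for every $Y_2$ in the strip. For an interior $Y_2$ one has $\s=\s_n$, so by Property~\ref{prop:X_2} every non-leaf node $(Y_1,Y_2)$ shares a common in-degree determined by $Y_2$ alone. Since there are $2^e p_n+O(1)$ interior values of $Y_2$, the total number of non-leaf interior nodes is $N_m+N_{m+1}=(2^e p_n)^2+O(2^e)=4^e p_n^2+O(2^e)$.

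The second identity counts preimages. Every $\X=(X_1,X_2)$ with $X_1\in[E_n,E_{n+1})$ is the preimage of its image, and there are $(2^e p_n+O(1))\cdot 2^e=4^e p_n+O(2^e)$ of them; summed over target nodes this quantity equals $\sum_{\Y}|\s_n|$, which on each interior node is exactly the in-degree. Hence $m\,N_m+(m+1)\,N_{m+1}=4^e p_n+O(2^e)$. Subtracting $m$ times the first identity gives $N_{m+1}=4^e\bigl(p_n-m\,p_n^2\bigr)+O(2^e)$, and back-substitution yields $N_m=4^e\bigl((m+1)p_n^2-p_n\bigr)+O(2^e)$. Dividing by $4^e$ and sending $e\to\infty$ produces the two claimed limits $\bigl(\mathrm{R}(1/p_n)+1\bigr)p_n^2-p_n$ and $p_n-p_n^2\,\mathrm{R}(1/p_n)$, which as a sanity check add up to $p_n^2$, the limiting fraction of all non-leaf nodes in the strip.

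The main obstacle is to justify the error bookkeeping rigorously rather than informally. One must check that the number of integers in $[E_n,E_{n+1})$ deviates from $2^e p_n$ by at most a constant, that the boundary values of $Y_2$ number $O(1)$ per strip (hence $O(2^e)$ nodes once $Y_1$ is free), and---crucially---that the in-degrees of those boundary nodes remain uniformly bounded, which is supplied by Proposition~\ref{prop:2n*d}. Together these guarantee that both counting identities hold up to an additive $O(2^e)$, so every discrepancy is $o(4^e)$ and vanishes after normalization by the total number of nodes, leaving exactly the asserted limiting ratios.
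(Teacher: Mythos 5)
Your proposal is correct and follows essentially the same route as the paper: count the non-leaf nodes in the strip ($\approx 4^e p_n^2$), count the total number of preimages ($\approx 4^e p_n$, since every $\X$ with $X_1\in[E_n,E_{n+1})$ lands in the strip), and solve the resulting two linear equations for $N_m$ and $N_{m+1}$. The only difference is presentational: you track the boundary contribution explicitly as $O(2^e)$ using the uniform in-degree bound from Proposition~\ref{prop:2n*d}, whereas the paper dismisses it informally as negligible.
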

\begin{proof}
For any node $\Y \in [0,2^e)\times [E_n, E_{n+1})$, if $\Y \in [0, 2^e)\times [\mathrm{R}(E_n) + 1, \mathrm{R}(E_{n+1})-1]$, this implies $\Y' \in [0, 2^e)\times [E_n, E_{n+1})$, which is the necessary and sufficient condition for $\X \in [E_n, E_{n+1})\times [0, 2^e)$. Let $N_{e}'$ denote the number of nodes with a fixed first coordinate and a second coordinate in $[\mathrm{R}(E_n) + 1, \mathrm{R}(E_{n+1})-1]$ in $\mathbb{F}_{e}$. Then,
$N_{e}'=\mathrm{R}(2^{e}F_{n+1})-\mathrm{R}(2^{e}F_n)-2$
follows from $E_n=2^eF_n$. 
According to Property~\ref{le:X1Y1i}, $\Y$ is a non-leaf node if there exists an injective $Y_1=\mathrm{R}\left(\frac{1}{p_n}(X_1-E_n)\right)$ for any $X_1 \in [E_n, E_{n+1})$. Consequently, there are $\mathrm{R}(E_{n+1})-\mathrm{R}(E_n)$ possible values for $Y_1$, resulting in $N_{e}'(N_{e}' + 2)$ non-leaf nodes.
As $e$ increases, $N_{e}'$ approaches $2^e p_n$, leading to approximately $(2^e p_n)^2$ non-leaf nodes in $[0,2^e)\times [E_n, E_{n+1})$. Similarly, there are about $2^{2e} p_n(1-p_n)$ leaf nodes in this interval. For nodes outside $[0,2^e)\times [\mathrm{R}(E_n) + 1, \mathrm{R}(E_{n+1})-1]$, their in-degree is less straightforward, but the proportion of such nodes to the total number becomes negligible as $e$ increases.
Referring to Corollary~\ref{coro:d1}, the in-degree of all non-leaf nodes in $[0,2^e)\times [E_n, E_{n+1})$ is either $\mathrm{R}(\frac{1}{p_n})$ or $\mathrm{R}(\frac{1}{p_n}) + 1$. Denoting $N_1$ as the number of nodes with in-degree $\mathrm{R}(\frac{1}{p_n})$, the number of nodes with in-degree $\mathrm{R}(\frac{1}{p_n}) + 1$ approximates $2^{2e}p_n^2-N_1$. With the necessary and sufficient condition for $\Y' \in [0, 2^e)\times [E_n, E_{n+1})$ being $\X \in [E_n, E_{n+1})\times [0, 2^e)$, approximately $2^{2e}p_n$ nodes are the preimages of non-leaf nodes in $[0,2^e)\times [E_n, E_{n+1})$. Therefore,
\[
N_1 \approx \left(\mathrm{R}\left(\frac{1}{p_n}\right) + 1\right)(2^{2e}p_n^2)-2^{2e}p_n.
\]
The ratio of the number of nodes with in-degree $\mathrm{R}(\frac{1}{p_n})$ in $[0,2^e)\times [E_n, E_{n+1})$ to the total number of nodes in $\mathbb{F}_e$ approaches $p_n^2\mathrm{R}(\frac{1}{p_n})-(1-p_n)$. Similarly, the ratio for nodes with in-degree $\mathrm{R}(\frac{1}{p_n}) + 1$ 
approaches $p_n-p_n^2\mathrm{R}(\frac{1}{p_n})$.
\end{proof}

\begin{Corollary}\label{coro:in-degreedis2}
The distribution of any given in-degree in the functional graph of baker's map~(\ref{eq:Gbaker}) approaches a constant as $e$ increases.
\end{Corollary}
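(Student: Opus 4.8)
The plan is to reduce the statement to Proposition~\ref{prop:in-degreedis1} by exploiting that the node set of $\mathbb{F}_e$ splits, according to which strip $[E_n, E_{n+1})$ contains the second coordinate $Y_2$, into finitely many pieces whose number $k$ is fixed by the control parameters $(p_1,\dots,p_k)$ and does not grow with $e$. On each piece $[0,2^e)\times[E_n,E_{n+1})$, Corollary~\ref{coro:d1} and Proposition~\ref{prop:in-degreedis1} already pin down the only two positive in-degrees occurring with non-vanishing frequency, namely $\mathrm{R}(1/p_n)$ and $\mathrm{R}(1/p_n)+1$, together with their limiting proportions, which I denote $a_n$ and $b_n$. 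The remaining nodes are the leaf nodes (in-degree $0$) and the boundary nodes whose unit interval $[Y_2,Y_2+1)$ straddles two strips; the proof of Proposition~\ref{prop:in-degreedis1} records that the latter form a set whose proportion tends to $0$, so they may be discarded in every limit.

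Fixing an integer in-degree $d\ge 1$, I would observe that, up to the negligible boundary set, a node realises in-degree $d$ exactly when it is a non-leaf node lying in a strip $n$ with $\mathrm{R}(1/p_n)=d$ or $\mathrm{R}(1/p_n)+1=d$. Hence the proportion of nodes of in-degree $d$ converges to the finite sum
\[
\sum_{n\,:\,\mathrm{R}(1/p_n)=d} a_n \;+\; \sum_{n\,:\,\mathrm{R}(1/p_n)+1=d} b_n ,
\]
whose summands are precisely the limiting ratios furnished by Proposition~\ref{prop:in-degreedis1}. Since $k$ is finite and each term converges, the whole expression converges to a constant depending only on $(p_1,\dots,p_k)$ and on $d$, not on $e$.

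The case $d=0$ is handled separately by counting leaf nodes. The proof of Proposition~\ref{prop:in-degreedis1} shows that strip $n$ contributes about $2^{2e}p_n(1-p_n)$ leaf nodes, so dividing by the total count $2^{2e}$ and summing over the $k$ strips gives a leaf proportion tending to $\sum_{n=1}^{k}p_n(1-p_n)=1-\sum_{n=1}^{k}p_n^2$, again a constant. Collecting the two cases, the proportion of nodes of any prescribed in-degree tends to a fixed limit, which is the assertion.

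The step I expect to be the main obstacle is the rigorous control of the boundary nodes, because the $n_1<n_2$ analysis preceding Proposition~\ref{prop:2n*d} shows these can realise in-degrees other than $\mathrm{R}(1/p_n)$ and $\mathrm{R}(1/p_n)+1$, so they cannot simply be absorbed into the strip-by-strip count. The way I would close this gap is a counting estimate: there are at most $k-1$ strip boundaries, and for large $e$ each strip has width $2^e p_n\gg 1$, so only $O(1)$ values of $Y_2$ per boundary make $[Y_2,Y_2+1)$ cross it; with $Y_1$ free over $[0,2^e)$ this yields $O(2^e)$ such nodes against a total of $2^{2e}$, a proportion of order $2^{-e}\to 0$. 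Invoking the corresponding negligibility claim already embedded in Proposition~\ref{prop:in-degreedis1} then justifies discarding them in all the limits above.
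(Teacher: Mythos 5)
Your proposal is correct and is essentially the derivation the paper intends: the corollary is stated without its own proof as an immediate consequence of Proposition~\ref{prop:in-degreedis1}, obtained exactly as you do by summing the strip-wise limiting ratios $a_n$, $b_n$ over the finitely many strips with $\mathrm{R}(1/p_n)=d$ or $\mathrm{R}(1/p_n)+1=d$. Your explicit $O(2^e)/2^{2e}$ bound on the boundary nodes and the separate $d=0$ count $1-\sum_{n=1}^{k}p_n^2$ supply details that the paper only asserts are negligible inside the proof of Proposition~\ref{prop:in-degreedis1}, so your write-up is, if anything, more complete than the source.
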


From Proposition~\ref{prop:in-degreedis1}, for any $n$, the nodes in $[0,2^e)\times [E_n, E_{n+1})$ follow similarly regularity as $e$ increase. Then, under under the specific parameter where all parameters $p_n$ are equal and equal to a power of two, all the nodes exhibit the same regularity as $e$ increases.
In this case, the in-degree for any non-leaf node of the corresponding functional graph is a constant, as described in Property~\ref{prop:degree}, namely the functional graph of baker's map~(\ref{eq:Gbaker}) demonstrates the obvious semi-fractal property.
When parameters $p_n\equiv \frac{1}{2^{2}}$, the functional graph of baker's map~(\ref{eq:Gbaker}) with a precision larger than two 
can be considered as the evolved version from $\mathbb{F}_2$ (shown in Fig.~\ref{fig:perioddistributionof2222}a) in a  semi-fractal way. 
The corresponding $\mathbb{F}_3$ and $\mathbb{F}_4$ are depicted in 
Fig.~\ref{fig:perioddistributionof2222}b) and c), respectively. 

\begin{Property}
If $p_n=\frac{1}{2^{e^*}}$ for any $n$, the in-degree of any non-leaf node in the functional graph of baker's map~(\ref{eq:Gbaker}) with precision $e\geq e^*$ is $2^{e^*}$, 
where $e^*$ is a positive integer.
\label{prop:degree}
\end{Property}
\begin{proof}
When $p_n=\frac{1}{2^{e^*}}$ for any $n \in \{1, 2, \cdots, k\}$, 
\begin{equation}\label{eq:2D:En}
 E_n=2^e\cdot \sum_{i=1}^{n-1} p_i=(n-1)2^{e-e^*}
\end{equation}
is an integer in the fixed-point arithmetic domain with precision $e\geq e^*$.
Assume $Y_2\in[E_n, E_{n+1})$, one has $Y_2\in [E_n, E_{n+1}-1]$. It follows from $Y_2$ is an integer that 
\begin{equation}
\label{eq:Y2Y2+1}
 [Y_2, Y_2+1)\subseteq [E_n, E_{n+1}).
\end{equation}
In such case, Eq.~\eqref{eq:SYs=X} becomes
\begin{multline*}
\s_n=\bigg\{\X \mid X_2 \in \\
 [2^{e^*}Y_2 - 2^{e^*}E_n, 2^{e^*}(Y_2+1) - 2^{e^*}E_n),
X_2\in\mathbf{Z} \bigg\}.
\end{multline*}
Referring to Eq.~\eqref{eq:xminmax} and the above relation, one obtains $X_{2, \min}= 2^{e^*}Y_2-2^{e^*}E_n$ and $X_{2, \sup}= 2^{e^*}(Y_2+1) -2^{e^*}E_n -1$.
Thus it follows that $|\s_n|=X_{2, \sup}-X_{2, \min}+1=2^{e^*}$.
Then, it follows from Eq.~\eqref{eq:Y2Y2+1} that $\Y' \in [0, 2^e)\times [E_n, E_{n+1})$. Because the sufficient and necessary condition for $\Y' \in [0, 2^e)\times [E_n, E_{n+1})$ is $\X\in [E_n, E_{n+1})\times [0, 2^e)$, one can obtains
$d=|\s|=|\s_n|=2^{e^*}$ from the definition of $|\s|$ and $|\s_n|$.
\end{proof}

The self-similarity inherent in the functional graph of baker's map is quantified by the fractal dimension, referred to as the self-similar exponent, considered at different scales corresponding to the number of iterations. baker's map is characterized by a piecewise function, which partitions the domain into $k^i$ intervals. Notably, each orbit terminating within the node in a specific interval exhibits identical iteration expressions after the $i$-th iteration. Thus, the orbits of length $i$ in the functional graph can be categorized into $k^i$ sets, where $k$ signifies the number of intervals in Eq.~(\ref{eq:Baker}).
Proposition~\ref{prop:in-degreedis1} elucidates that the average in-degree of all non-leaf nodes in the functional graph of baker's map (Eq.~(\ref{eq:Gbaker})) remains constant $d_a$. Consequently, the cardinality of orbits tends to $d_a^i$ in the functional graph post $i$ iterations. The fractal dimension, denoted as $D_B$, is expressed as
\[
D_B= -\frac{\log(k)}{\log(d_a)}.
\]
Furthermore, the structural composition of the functional graph of baker's map entails both cycles and trees, with the cycles constituting a fractional portion of the overall network. The tree structure exhibits characteristics akin to a fractal tree. The average length of the entry cycle can be approximated by considering the aggregate length of all cycles in the network and the size of the entire network. When $p_n\equiv \frac{1}{k}$, one has $D_B = 1$ and the tree adheres to a strictly fractal pattern. 
For the scenario where $p_n\equiv \frac{1}{2^{e^*}}$, Property~\ref{prop:degree} demonstrates that the in-degree of any non-leaf node in the functional graph of baker's map remains a constant. 
Properties~\ref{prop:e-e+12X_1} and~\ref{prop:e-e+12X_1+1} unveil that each tree, comprising $1+2^{e^*}$ nodes
in $\mathbb{F}_{e}$, corresponds to four isomorphic trees in $\mathbb{F}_{e+1}$. Notably, the root nodes of two of these trees exhibit a systematic displacement pattern.
The evolution of the functional graph $\mathbb{F}_{e+1}$ from $\mathbb{F}_{e}$ is illustrated in Figure~\ref{fig:BakerMapXYE-E+1}, depicting the case for $e^*=2$.
This phenomenon is identified as the emergence mechanism of semi-fractals in $\mathbb{F}_{e}$ with increasing implementation precision $e$. an analysis further expounded upon in Sec.~\ref{sec:HDBM}.

\begin{Property}
\label{prop:e-e+12X_1}
If $p_n\equiv \frac{1}{2^{e^*}}$ and $\mathrm{B}_e(X_1, X_2)=\Y$, one has
\begin{multline}
\mathrm{B}_{e+1}(2X_1, 2X_2)= \mathrm{B}_{e+1}(2X_1, 2X_2+1)=\\*%
 \begin{cases}
 (2Y_1, 2Y_2)   & \mbox{if } \Delta\in[0, \frac{1}{2p_n});\\
 (2Y_1, 2Y_2+1) & \mbox{if } \Delta\in[\frac{1}{2p_n}, \frac{1}{p_n}),
 \end{cases}
\label{eq:2D:Be+12X1} 
\end{multline}
where $e \geq e^*$ and $\Delta=X_2 + 2^{e}(n-1)-\frac{Y_2}{p_n}$.
\end{Property}
\begin{proof}
Referring to Eq.~\eqref{eq:GBakerp}, one can get 
\begin{multline}\label{eq:2x12x2e+1}
\mathrm{B}_{e+1}(2X_1, 2X_2)=\\
 \mathrm{R}\left(2^{e^*+1}X_1-2^{e+1}(n-1),\frac{2X_2+2^{e+1}(n-1)}{2^{e^{*}}} \right),
\end{multline}
and 
\begin{multline*}
\mathrm{B}_{e+1}(2X_1, 2X_2+1)=\\
 \mathrm{R}\left(2^{e^*+1}X_1-2^{e+1}(n-1),\frac{2X_2+1+2^{e+1}(n-1)}{2^{e^{*}}} \right).
\end{multline*}
It follows from $e \geq e^*$ and $\mathrm{R}\left(\frac{2X_2}{2^{e^{*}}}\right) = R\left(\frac{2X_2+1}{2^{e^{*}}} \right)$ that $\mathrm{R}\left(\frac{2X_2+2^{e+1}(n-1)}{2^{e^{*}}}\right) = \mathrm{R}\left(\frac{2X_2+1+2^{e+1}(n-1)}{2^{e^{*}}} \right)$, that is, 
\begin{equation}\label{eq:2x2+1=2X_2}
\mathrm{B}_{e+1}(2X_1, 2X_2) = \mathrm{B}_{e+1}(2X_1, 2X_2+1).
\end{equation}
Then, according to Eq.~\eqref{eq:GBakerp}, one has
\begin{equation}
\label{eq:Y_1Y_2X1X2}
     \begin{cases}
     Y_1 = 2^{e^*}X_1 -(n-1)2^{e}\\
     Y_2 = \mathrm{R}\left( X_2/2^{e^*} \right)+ (n-1)2^{e-e^*}.
     \end{cases}
\end{equation}
Referring to Eq.~\eqref{eq:2x12x2e+1}, one can get
\begin{equation*}
\mathrm{B}_{e+1}(2X_1, 2X_2)=\left(2Y_1, \mathrm{R}(2Y_2 + 2p_n\Delta)\right).
\end{equation*}
Because $p_n\Delta=X_2/2^{e^*}+ (n-1)2^{e-e^*} -Y_2 \in [0,1)$, one can get 
\begin{equation*}
\mathrm{R}(2Y_2 + 2p_n\Delta) =
 \begin{cases}
 2Y_2   & \mbox{if } \Delta\in[0, \frac{1}{2p_n});\\
 2Y_2+1 & \mbox{if } \Delta\in[\frac{1}{2p_n}, \frac{1}{p_n}).
 \end{cases}
\end{equation*}
So, one can get Eq.~\eqref{eq:2D:Be+12X1} from the above equation and Eq.~\eqref{eq:2x2+1=2X_2}.
\end{proof}

\begin{Property}
\label{prop:e-e+12X_1+1}
If $p_n\equiv \frac{1}{2^{e^*}}$ and $\mathrm{B}_e(X_1, X_2)=\Y$, one has
\begin{multline*}
\mathrm{B}_{e+1}(2X_1+1, 2X_2)= \mathrm{B}_{e+1}(2X_1+1, 2X_2+1)=\\
 \begin{cases}
 (2Y_1+2^{e^*+1}, 2Y_2) & \mbox{if } \Delta\in[0, \frac{1}{2p_n});\\
 (2Y_1+2^{e^*+1}, 2Y_2+1) & \mbox{if } \Delta \in[\frac{1}{2p_n}, \frac{1}{p_n}),
 \end{cases} 
\end{multline*}
where $\Delta=X_2 + 2^{e}(n-1)-\frac{Y_2}{p_n}$ and $e\geq e^*$.
\end{Property}
\begin{proof}
The proof is very similar to that of Property~\ref{prop:e-e+12X_1} and omitted here.
\end{proof}

\begin{figure}[!htb]
\centering
\begin{minipage}{1.1\BigOneImW}
 \includegraphics[width=1.1\BigOneImW]{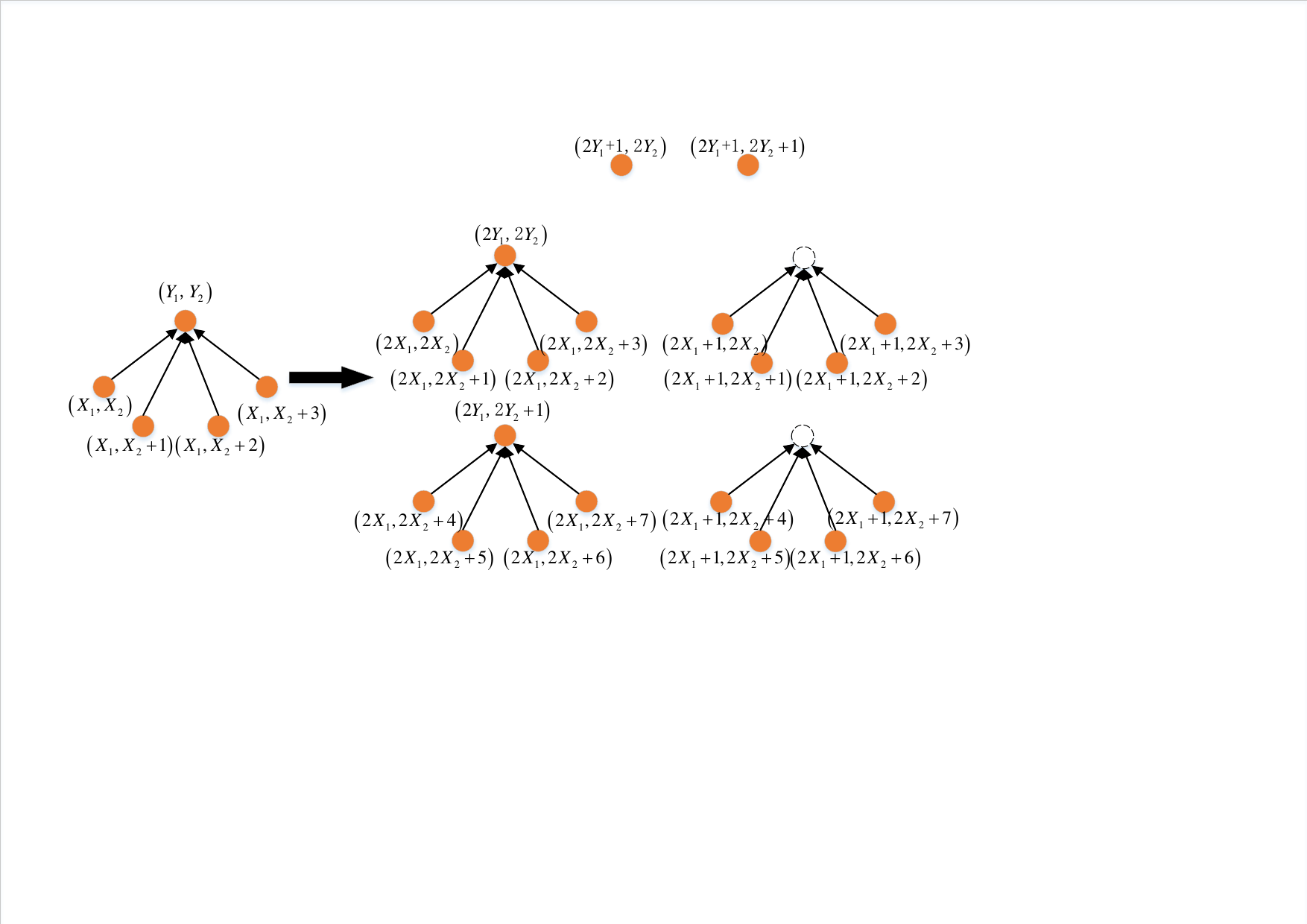}
\end{minipage}
\caption{The mapping relation between functional graph of baker's map~(\ref{eq:Gbaker}) with precision $e$ and that with precision $e+1$.}
\label{fig:BakerMapXYE-E+1}
\end{figure}

\section{The functional graph of the higher-dimensional baker's map in digital domain}
\label{sec:HDBM}

In \cite{OZTURK2018395}, \"{O}zt\"{u}rk and Kili\c{c} 
replaced the least significant zero bits of every state of some
special chaotic maps,
whose implementation only involves bitwise logic operators, 
by random ones to produce their true chaotic orbits in a digital domain.
In \cite{Ozturk:Baker:TCASI2019}, they further extend the idea to
every dimension of a higher-dimensional baker's map.
It was proved to satisfy Devaney's definition of chaos as \cite{wangqx:higher:TCASI21}. 
This section analyzes the graph structure of this map implemented in a digital domain with fixed-point and floating-point number representation.

Given a dimension $m$, the corresponding HDBM, $\mathcal{B}: [0, 1)^m$ $\rightarrow [0, 1)^m$, can be represented as
\begin{equation}
\label{eq:BH}
\mathcal{B}(\x)=
 \begin{cases}
 \mathrm{T}_0(\x) & \mbox{if }\x\in C_0; \\
 \mathrm{T}_1(\x) & \mbox{if }\x\in C_1; \\
\hspace{1em}\vdots & \hspace{2em}\vdots \\
 \mathrm{T}_{2^{m-1}-1}(\x) & \mbox{if }\x\in C_{2^{m-1}-1},
 \end{cases}
\end{equation}
where $\x=(x_1, x_2, \cdots, x_m)^\intercal$,
\begin{multline*}
C_k=\left\{\x\ \Big\vert\ x_i \in \left[\frac{\beta(k, i)}{2}, \frac{\beta(k, i)+1}{2}\right), i=1\sim m-1; \right.\\
x_m \in [0, 1) \bigg\}, 
\end{multline*}
affine transformation
\begin{equation}
\label{eq:Tkx}
\mathrm{T}_k(\x)= {\bf A}\cdot \x-{\bf p}_k,
\end{equation}
\begin{equation*}
{\bf A}=\begin{bmatrix}
 2 & 0 & \cdots & 0 & 0 \\
 0 & 2 & \cdots & 0 & 0 \\
 \vdots & \vdots & \ddots & \vdots & \vdots \\
 0 & 0 & 0 & 2 & 0 \\
 0 & 0 & 0 & 0 & \frac{1}{2^{m-1}}
 \end{bmatrix},
\end{equation*}
\begin{equation*}
{\bf p}_{\rm k}=
 \begin{bmatrix}
 \beta(k, 1) \\
 \beta(k, 2) \\
 \vdots \\
 \beta(k, m-1) \\
 -k/2^{m-1}
 \end{bmatrix},
\end{equation*}
$\beta(k, i)$ outputs the $i$-th significant bit of $k$ and
$k\in\{0, 1$, $\cdots, 2^{m-1}-1\}$.
The schematic demonstration of the transformation of HDBM with $m=3$ is shown in Fig.~\ref{fig:HDBM}. When $\x\in C_k$, one has $\mathrm{B}_{\rm H}(\x) \in D_k$,
where
\begin{multline*}
D_k=\biggl\{{\bf y}\ \Big\vert\ y_m \in \left[\frac{k}{2^{m-1}}, \frac{k+1}{2^{m-1}}\right)\\
 y_i \in [0, 1), i=1\sim m-1 \biggr\}.
\end{multline*}
To ensure that the structure of the functional graph of HDBM is not completely degraded in the arithmetic domain, its precision $e$ should be larger than $m$.

\begin{figure}[!htb]
\centering
\begin{minipage}{1\BigOneImW}
\includegraphics[width=1\BigOneImW]{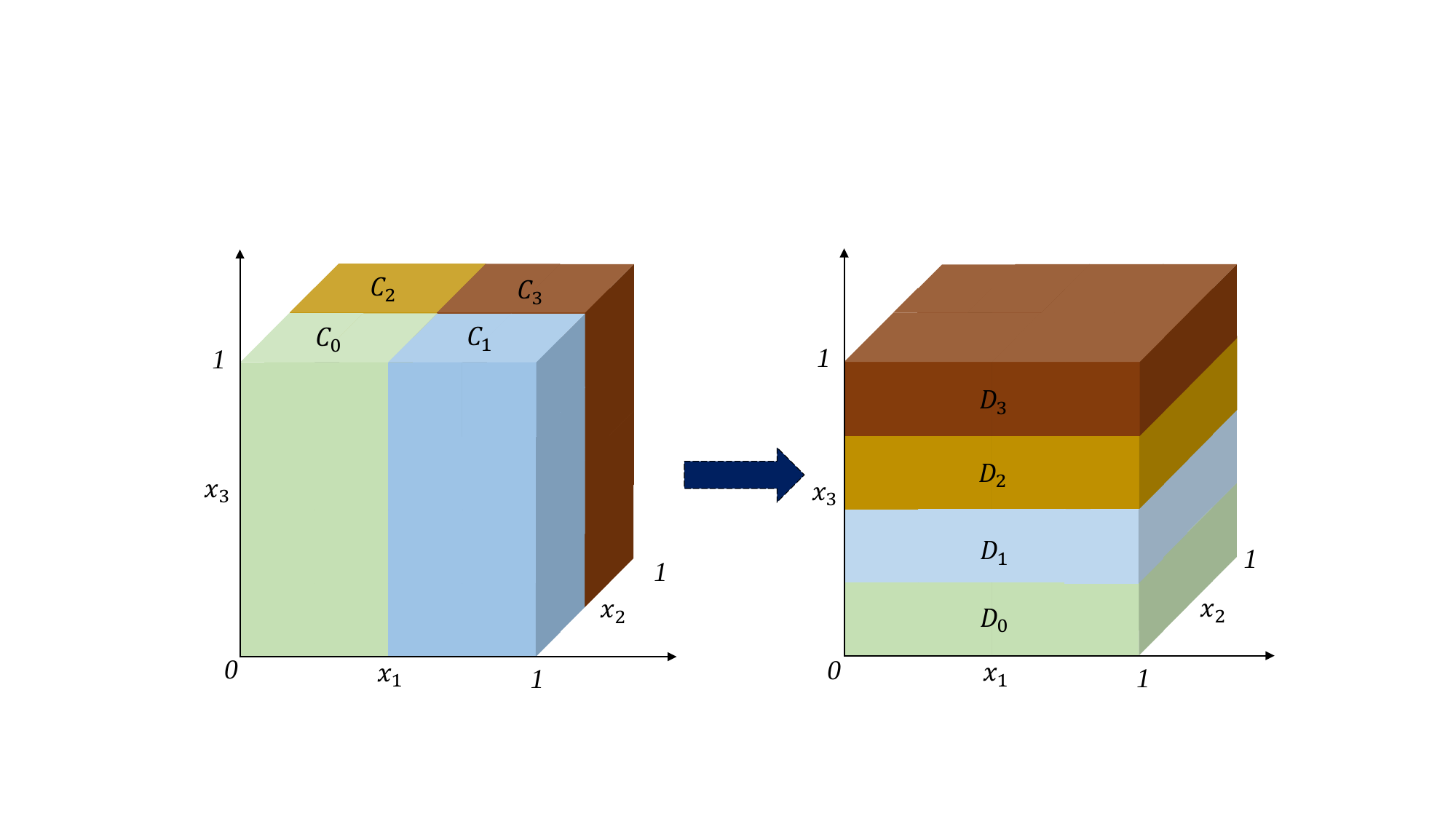}
\end{minipage}
\caption{Transformation demonstration of the higher-dimensional baker's map~(\ref{eq:BH}).}
\label{fig:HDBM}
\end{figure}

\subsection{The functional graph of HDBM in a fixed-point arithmetic domain}
\label{subsec:HDBMFix}

In the fixed-point arithmetic domain with precision $e$, the domain and range of HDBM~(\ref{eq:BH}) are both discrete set
$\{{\bf x} \mid x_1=\frac{X_1}{2^e}, x_2=\frac{X_2}{2^e}, \cdots,
x_m=\frac{X_m}{2^e}, X_i\in\mathbb{Z}_{2^e}\}$.
Hence, it can be represented as
\begin{equation}
\label{eq:B_He}
\mathcal{B}_{\rm e}(\X)=\mathrm{R}(2^e\cdot\mathcal{B}(\X/2^{e})),
\end{equation}
where $\X=(X_1, X_2, \cdots, X_m)^\intercal$.
And for $\X\in C'_k$, one has $\Y=\mathcal{B}_{\rm e}(\X) \in D'_k$, where 
$$ C'_k=\left\{\X\ \Big\vert\ \frac{\X}{2^e} \in C_k \right\}$$ 
and 
$$D'_k= \left\{\Y\ \Big\vert\ \frac{\Y}{2^e} \in D_k \right\}.$$
Let $\mathbb{F}_{\rm H, e}$ denote the functional graph corresponding to HDBM~(\ref{eq:B_He}).
As shown in Property~\ref{prop:indgreem}, the in-degree of any non-leaf node of
$\mathbb{F}_{\rm H,e}$ is a constant related with $m$ when the arithmetic precision $e$ is large enough.

\begin{Property}
\label{prop:indgreem}
When $e\geq m$, in-degree of any non-leaf node in $\mathbb{F}_{\rm H,e}$ is $2^{m-1}$.
\end{Property}
\begin{proof}
As for any non-leaf node $\Y\in D'_k$, $k$ and $\tilde{{\bf p}}_{\rm k}$ are well-determined. According to Eqs.~\eqref{eq:BH} and ~\eqref{eq:B_He}, one has $\Y = R({\bf A}\cdot \X-2^e{\bf p}_k)$. Since $\bf{A}$ is a diagonal matrix, the above equation can be express as 
\begin{equation}
 \label{eq:H,e:Ym}
 Y_m= \mathrm{R}\left(\frac{X_m}{2^{m-1}}+2^{e-m+1}k\right)
\end{equation}
and 
\begin{equation}
\label{eq:YXsub}
\begin{split}
 \tilde{\Y} &= \mathrm{R}(2 {\bf I}_{m-1}\cdot \tilde{\X}
 -2^e\tilde{{\bf p}}_{\rm k}) \\
            &=2\tilde{\X} -2^e\tilde{{\bf p}}_{\rm k},
 \end{split}
\end{equation}
where  ${\bf I}_{m-1}$ is an identity matrix of order $(m-1)\times (m-1)$, $\tilde{\Y}=(Y_1, Y_2, \cdots, Y_{m-1})^\intercal$, $\tilde{\X}=(X_1, X_2, \cdots, X_{m-1})^\intercal$ and $\tilde{{\bf p}}_{\rm k}=(\beta(k, 1)$, $\beta(k, 2), \cdots, \beta(k, m-1))^\intercal$.
Define the set containing all its preimages as $\s=\{\X\ \vert\ \mathcal{B}_{\rm e}(\X )=\Y \}$. 
For a given $\Y$, $k$ and $\tilde{{\bf p}}_{\rm k}$ are well-determined. one can see that $\tilde{\X}$ is also well-determined via Eq.~(\ref{eq:YXsub}) for any $\Y$ in a fixed-point arithmetic domain, that is, the cardinality of $\s$ is only determined by $Y_m$.
So, referring to Eq.~\eqref{eq:H,e:Ym}, one has 
\begin{equation*} 
 |\s|=\left|\left\{X_m\ \Big\vert\ \mathrm{R}\left(\frac{X_m}{2^{m-1}}+2^{e-m+1}k\right)=Y_m\right\}\right|.
\end{equation*}
As $k\in \mathbb{Z}$, the above equation can be simplified as
\begin{equation*}
 |\s|= \left|\left\{X_m\ \big\vert \ X_m\in\left[\hat{Y}_m, \hat{Y}_m+2^{m-1}\right)\right\}\right|,
\end{equation*}
where $\hat{Y}_m=2^{m-1}Y_m-2^{e}k$ is an integer.
So, the in-degree of $\Y$ in $\mathbb{F}_{\rm H,e}$ is equal to the number of integers in the above set, namely $2^{m-1}$.
\end{proof}

To facilitate the following discussion, we define a $2^{m-1}$-way ``full recursive tree" with height $i$ as $\mathcal{T}_{i}$:
1) link every element in a set containing $2^{m-1}$ nodes to a given node with a direct edge;
2) recursively link every node in the $j$-th level of the tree with $2^{m-1}$ new nodes for $j=1\sim i-1$ if $i\ge 2$ (every edge directs from the higher level to the lower level).
Referring to Property~\ref{prop:indgreem}, the basic unit of $\mathbb{F}_{\rm H, e}$ is a tree
containing one non-leaf node (the root) and its $2^{m-1}$ child nodes connected by directed edges.
Such a unit is isomorphic to $\mathcal{T}_1$.
When $e=3, m=3$, the $\mathcal{T}_1$ corresponding to HDBM~(\ref{eq:B_He}) is shown in Fig.~\ref{fig:3DBakermap8}a), which 
is a basic part of the subgraph shown in Fig.~\ref{fig:3DBakermap8}b).
The entire functional graph is depicted in Fig.~\ref{fig:3DBakermap8}c).
Observing Fig.~\ref{fig:3DBakermap8}, one can see that 
a functional graph consists of some connected components isomorphic to $\mathcal{T}_3$, which is composed of the basic part isomorphic to $\mathcal{T}_1$ in a semi-fractal way.
First, to analyze the structure of the $\mathbb{F}_{\rm H,e}$, it is necessary to divide $\mathbb{F}_{\rm H,e}$ into several subgraphs as described in Property~\ref{prop:H,e:Ftwopart}.
Then, based on Property~\ref{prop:H,e:SisoTi}, the structures of the tree and semi-fractal block are disclosed in Properties~\ref{prop:0tree} and~\ref{prop:0fractal}, respectively.
So, the entire functional graph is also disclosed in Proposition~\ref{prop:funMap}.

\begin{figure}[!htb]
 \centering
 \begin{minipage}{0.5\twofigwidth}
 \centering
 \vspace{2em}\includegraphics[width=0.5\twofigwidth]{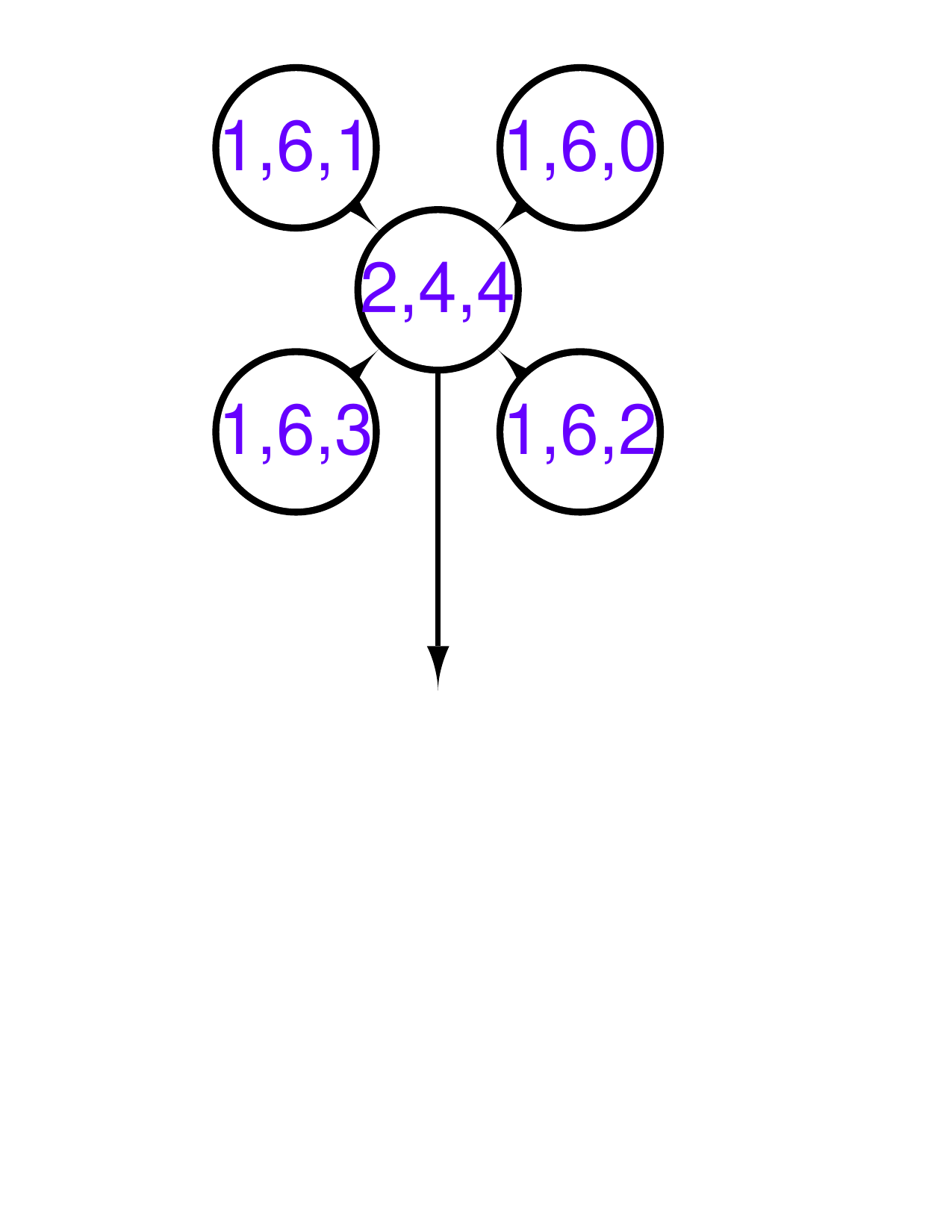}
 \vspace{0.5em}\subcaption*{a)}
 \end{minipage}
 \hspace{2em}
 \begin{minipage}{0.8\twofigwidth}
 \centering
 \includegraphics[width=0.8\twofigwidth]{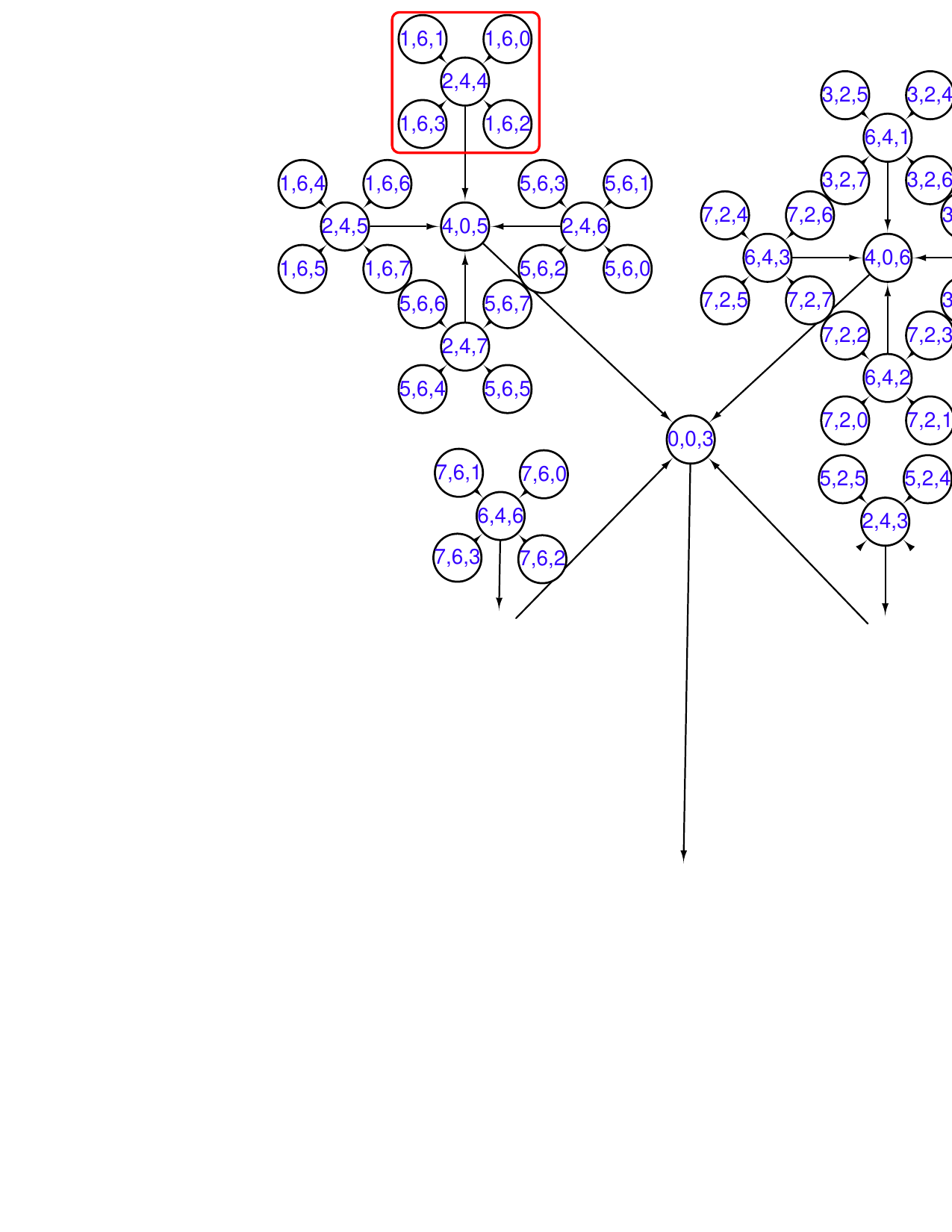}
 \subcaption*{b)}
 \end{minipage}
 \begin{minipage}{1.0\BigOneImW}
 \centering
 \includegraphics[width=1.0\BigOneImW]{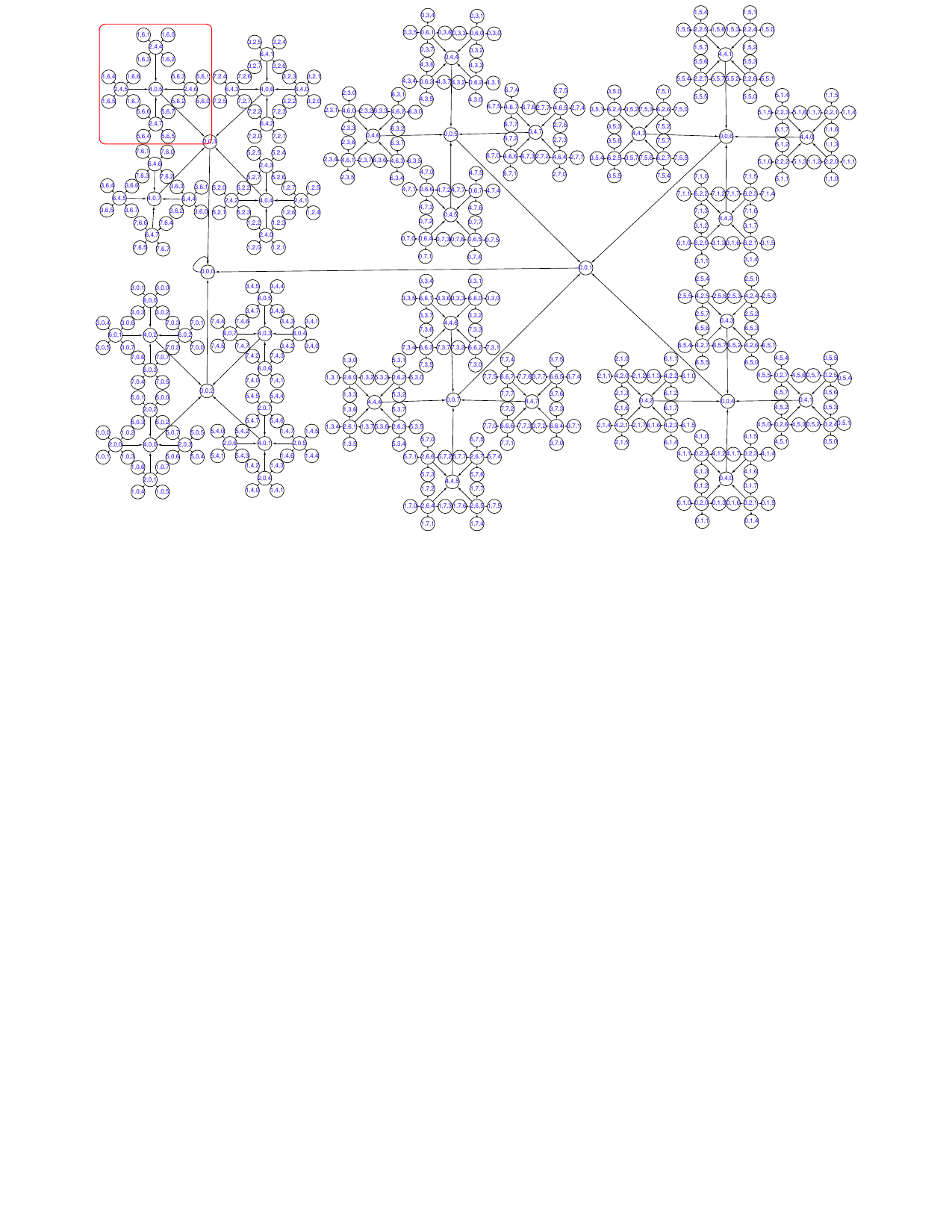}
 \subcaption*{c)}
 \end{minipage}
 \caption{The hierarchical structure of the functional graph of HDBM~(\ref{eq:B_He}) implemented in a fixed-point arithmetic domain with $m=3$, $e=3$:
 a) subgraph I;
 b) subgraph II;
 c) the entire functional graph.}
\label{fig:3DBakermap8}
\end{figure}

\begin{Property}\label{prop:H,e:Ftwopart}
The whole functional graph of HDBM~(\ref{eq:B_He}) is a connected component and 
composed of subgraph $\hat{\s}_0$ spanned by all non-zero elements in $\s_0=\{\X\ \vert\ \tilde{\X}={\bf 0}\}$ and other subgraphs rooted at every leaf node of $\hat{\s}_0$.
\end{Property}
\begin{proof}
As subgraph $\hat{\s}_0$ dominate the basic structure of the whole functional graph of HDBM~(\ref{eq:B_He}),
the evolution analysis of every node is divided into the following two cases:
\begin{itemize}
\item $\X \in \s_0$: 
one can get $\mathcal{B}_{\rm e}(\X)_i=0$ for 
$i=1\sim m-1$. Referring to Eq.~\eqref{eq:YXsub}, for any positive integer $j$, one has 
\begin{equation}
\label{eq:H,e:S0Bjs}
\tilde{\X}^{(j)}_{\rm e}={\bf 0},
\end{equation}
where ${\X}^{(j)}_{\rm e} = \mathcal{B}_{\rm e}^{j}(\X)$  and $\tilde{\X}^{(j)}_{\rm e}$ is the vector composed by the first $m-1$ elements of ${\X}^{(j)}_{\rm e}$. 
It follows from Eq.~\eqref{eq:H,e:Ym} that $X^{(1)}_{\rm e, m} =\mathrm{R}(X_m/2^{m-1})\leq X_m/2^{m-1}$, where $X^{(1)}_{\rm e, i}$ is the $i$-th element of $\mathcal{B}_{\rm e}(\X)$, then 
\begin{equation}
\label{eq:H,e:BjX}
{X}^{(j)}_{\rm e, m}\leq X_m/2^{j(m-1)}
\end{equation}
for any positive integer $j$.
So, there exists $n_1$ satisfying ${X}^{(n_1)}_m<1$, namely ${\X}^{(n_1)}={\bf 0}$.
So, all nodes in $\s_0$ form a subgraph $\hat{\s}_0$ rooted at node $\bf{0}$.
 
\item $\X \not\in \s_0$: it follows from Eq.~\eqref{eq:YXsub} that there exists a positive integer $n_2$ satisfying
\begin{equation}
\left\{
 \begin{split}
 \tilde{\X}^{(n_2-1)}_{\rm e}&\neq{\bf 0},\\
 \tilde{\X}^{(n_2)}_{\rm e} &={\bf 0}.
 \end{split}
\right.
\end{equation}
So, one has ${\X}^{(n_2-1)}_{\rm e}\not\in C'_0$. According to Eq.~\eqref{eq:H,e:Ym}, ${\X}^{(n_2)}_{\rm e} \in \{\X \mid X_m \geq 2^{e-m+1}, \X\in\s_0\}$.
Then, by Eqs.~\eqref{eq:H,e:S0Bjs} and~\eqref{eq:H,e:BjX}, the set only include all leaf nodes of $\hat{\s}_0$. And for any node $\Y'$ in the set, because of Eqs.~\eqref{eq:YXsub} and~\eqref{eq:H,e:Ym}, one has $k'=\mathrm{R}(Y'_m/2^{e-m+1})$ and there exists $\X'$ satisfying $\mathcal{B}_{\rm e}(\X')=\Y'$ when $\tilde{\X}'=2^{e-1}\tilde{\bf p}_{\rm k'}$ and $X'_m = 2^{m-1}Y'_m-2k'2^{e}$.
It means that any node in the set is not a leaf node of $\mathbb{F}_{\rm H,e}$.
Therefore for any leaf node of $\hat{\s}_0$, there is a subgraph is rooted at it.
\end{itemize}
Combining the above two cases, the property holds.
\end{proof}

\begin{Property}
\label{prop:H,e:SisoTi}
Given a node $\Y^*\neq {\bf 0}$ and positive integer $i^*$, if the cardinality of set 
\begin{equation}
\label{eq:setx*}
\{\X\ \vert\ \X^{(i)}_{\rm e}=\Y^*, i\in\{0, 1, \cdots, i^*\}\}
\end{equation}
is equal to $\sum_{i=0}^{i^*}2^{i(m-1)}$,
then all nodes in the above set compose a subgraph isomorphic to $\mathcal{T}_{i^*}$
with respect to their mapping relation determined by HDBM~(\ref{eq:B_He}),
where $\X^{(i)}$ represents the $i$-th iteration of the map.
\end{Property}
\begin{proof}
Referring to Property~\ref{prop:indgreem}, the cardinality of set~\eqref{eq:setx*} is less than or equal to $\sum_{i=0}^{i^*}2^{i(m-1)}$.
If the cardinality of set~\eqref{eq:setx*} is $\sum_{i=0}^{i^*}2^{i(m-1)}$, the tree rooted at $\Y^*$ has at least $i^*$ levels and the $i$-th level of the tree has $2^{i(m-1)}$ nodes. Otherwise, it would contradict with Property~\ref{prop:indgreem}.
So, the tree is isomorphic to $\mathcal{T}_{i^*}$. 
\end{proof}

\begin{Property}\label{prop:0tree}
Ignoring the self-link to root node numbered zero, $\hat{S}_0$ is isomorphic to a tree composing of root ${\bf 0}$, $2^{e-i_e(m-1)}-1$ subtrees isomorphic to
$\mathcal{T}_{i_e}$ and $2^{m-1}-2^{e-i_e(m-1)}$ subtrees 
isomorphic to
$\mathcal{T}_{i_e-1}$, where $i_e=\mathrm{R}(\frac{e}{m-1})$.
\end{Property}
\begin{proof}
Referring to Property~\ref{prop:H,e:Ftwopart}, all nodes in $\s_0$ form a subgraph rooted at node $\bf{0}$.
Divide $\s_0$ into subsets $\{\bf{0}\}$ and
\begin{equation}
\label{eq:3D:Sk}
\begin{split}
\s_{a, i} & = \{\X\ \vert\ \X\in \mathrm{\s_0}, \X^{(i)}_{\rm e}=(0, 0, \cdots, a)^\intercal\}\\
 & =\{\X\ \vert\ X_m \in[a\cdot 2^{i(m-1)}, (a+1)\cdot 2^{i(m-1)}),\\
& \quad \quad\quad\quad\X\in \mathrm{\s_0}, (a+1)\cdot 2^{i(m-1)} \leq 2^e\},
\end{split}
\end{equation}
where $a= 1, 2, \cdots, 2^{m-1}-1$, and $i= 0, 1, \cdots, i_e$.
Then, one has
\[
\bigcup_{i=0}^{i^*}\s_{a, i}= \{\X\ \vert\ \X^{(i)}_{\rm e}=(0, 0, \cdots, a)^\intercal, i\in\{0, 1, \cdots, i^*\}\}
\]
and $\left|\cup_{i=0}^{i^*}\s_{a, i}\right|=\sum_{i=0}^{i^*-1}|\s_{a, i}|=\sum_{i=0}^{i^*-1}2^{i(m - 1)}$. It follows from Property~\ref{prop:H,e:SisoTi} that all elements in $\left|\cup_{i=0}^{i^*}\s_{a, i}\right|$ compose a subgraph isomorphic to $\mathcal{T}_{i^*}$.
And as $(k+1)\cdot 2^{i(m-1)} \leq 2^e$ in Eq.~\eqref{eq:3D:Sk}, there are two ranges of $i$ with different scopes of $a$:
\begin{itemize}
\item $a\in[1, 2^{e - i_e(m - 1)})$: $i\in 
\{0, 1, \cdots, i_e\}$. So, all nodes in $\bigcup\limits{}_{i=0}^{i_e}\s_{a, i}$ compose a subgraph isomorphic to $\mathcal{T}_{i_e}$.
 
\item $a \in[2^{e - i_e(m - 1)}, 2^{m - 1})$: $i\in \{0, 1, \cdots, i_e-1\}$. So, all nodes in $\bigcup\limits{}_{i=0}^{i_e-1}\s_{a, i}$ compose a subgraph isomorphic to $\mathcal{T}_{i_e-1}$.
\end{itemize}
Combining the above two cases and $\X^{(i)}_{\rm e}=\bf{0}$ for any $\X=(0, 0, \cdots, a)^\intercal$, this property is proved.
\end{proof}

\begin{Property}
\label{prop:0fractal}
Any leaf-node of the subgraph spanned with all nodes in $\s_0$ 
in $\mathbb{F}_{\rm H, e}$ is the root of a subgraph isomorphic to $\mathcal{T}_e$.
\end{Property}
\begin{proof}
Referring to Eq.~\eqref{eq:B_He}, a node $\X$ is a non-leaf node if $\X$ satisfies that $X_m\cdot 2^{m-1}-k\cdot 2^{e}$ is an integer and $X_i/2$ is an integer with $i\neq m$.
It is obvious that $X_m\cdot 2^{m-1}-k\cdot 2^{e}$ is always an integer. 
Let $X^{(-1)}_{\rm e, i}$ be the $i$-th element of the preimage of $\X$.
 When $i\in\{1,\cdots,m-1\}$, for any leaf-node $\X$ of the tree spanned with all nodes in $\s_0$, 
$X^{(-j)}_i \equiv 0 \pmod{2^{e-j}}$ holds. So, $X^{(-j)}_{\rm e, i}/2$ 
is an integer for any $j\in\{0, 1, \cdots, e-1\}$. When $j=e$, there exists $i\neq m$ satisfying that $X^{(-j)}_{\rm e, i}/2$ is not an integer.
It means that $\X^{(-j)}_{\rm e, i}$ is non-leaf node when $j\in\{0,1,\cdots,e-1\}$ and $\X^{(-j)}_{\rm e}$ is leaf node.
Combining Property~\ref{prop:indgreem} and the definition of $\mathcal{T}_i$, one has $\X^{(e)}_{\rm e, i}$ is the root of a subgraph isomorphic to $\mathcal{T}_e$.
\end{proof}

\begin{Proposition}\label{prop:funMap}
Ignoring the self-link to root node numbered zero, the functional graph of HDBM~(\ref{eq:B_He}) is isomorphic to a tree composed of root ${\bf 0}$, $2^{e-i_e(m-1)}-1$ subtrees $\mathcal{T}_{i_e+e}$ and $2^{m-1}-2^{e-i_e(m-1)}$ subtrees $\mathcal{T}_{i_e+e-1}$, where $i_e=\mathrm{R}(\frac{e}{m-1})$.
\end{Proposition}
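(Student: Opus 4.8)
The plan is to assemble the statement directly from the four preceding structural properties, viewing the whole graph as the backbone tree $\hat{\s}_0$ with a ``fractal'' block grafted onto each of its leaves. First I would invoke Property~\ref{prop:H,e:Ftwopart} to present $\mathbb{F}_{\rm H, e}$ as a single connected component: the subgraph $\hat{\s}_0$ spanned by $\s_0$ and rooted at $\bf 0$, together with a subgraph hanging from each leaf node of $\hat{\s}_0$. Property~\ref{prop:0tree} then pins down $\hat{\s}_0$ exactly --- ignoring the self-loop at $\bf 0$, it is the root $\bf 0$ carrying $2^{e-i_e(m-1)}-1$ subtrees isomorphic to $\mathcal{T}_{i_e}$ and $2^{m-1}-2^{e-i_e(m-1)}$ subtrees isomorphic to $\mathcal{T}_{i_e-1}$, where $i_e=\mathrm{R}(\frac{e}{m-1})$ --- while Property~\ref{prop:0fractal} tells us that every leaf of $\hat{\s}_0$ is itself the root of a subgraph isomorphic to $\mathcal{T}_e$.

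The only genuinely new ingredient is a \emph{grafting lemma} for the full recursive trees $\mathcal{T}_i$, which I would state and verify straight from their recursive definition: attaching a copy of $\mathcal{T}_h$ (by identifying its root) to every leaf of $\mathcal{T}_g$ yields a graph isomorphic to $\mathcal{T}_{g+h}$. The check is immediate --- every node at levels $0$ through $g-1$ already has in-degree $2^{m-1}$ inside $\mathcal{T}_g$; each former leaf, now the root of an attached $\mathcal{T}_h$, acquires in-degree $2^{m-1}$; and the branching propagates uniformly down to level $g+h$, so every non-leaf node of the union has in-degree exactly $2^{m-1}$ and all leaves sit at common depth $g+h$. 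The uniformity of this branching is precisely what Property~\ref{prop:indgreem} guarantees for $e\ge m$.

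Applying the grafting lemma subtree by subtree completes the proof. Each $\mathcal{T}_{i_e}$ subtree of $\hat{\s}_0$, whose leaves are exactly the $\mathcal{T}_e$-roots supplied by Property~\ref{prop:0fractal}, grows into $\mathcal{T}_{i_e+e}$; similarly each $\mathcal{T}_{i_e-1}$ subtree grows into $\mathcal{T}_{i_e+e-1}$. Since grafting preserves the number of subtrees of each kind and only raises their heights by $e$, the resulting graph is the root $\bf 0$ bearing $2^{e-i_e(m-1)}-1$ copies of $\mathcal{T}_{i_e+e}$ and $2^{m-1}-2^{e-i_e(m-1)}$ copies of $\mathcal{T}_{i_e+e-1}$, exactly as claimed. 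I expect the main obstacle to be bookkeeping rather than insight: one must confirm that the leaf set of $\hat{\s}_0$ named in Property~\ref{prop:0tree} coincides with the attachment points of Property~\ref{prop:0fractal} and that grafting introduces no edges beyond those counted, both of which follow from the connectedness and the uniform in-degree $2^{m-1}$ already established.
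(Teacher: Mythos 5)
Your proposal is correct and follows essentially the same route as the paper, whose proof is the one-line statement that the proposition follows by combining Properties~\ref{prop:0tree} and~\ref{prop:0fractal}. You merely make explicit the grafting step (attaching $\mathcal{T}_e$ to every leaf of $\mathcal{T}_{i_e}$ yields $\mathcal{T}_{i_e+e}$) and the appeal to Properties~\ref{prop:H,e:Ftwopart} and~\ref{prop:indgreem}, all of which the paper leaves implicit.
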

\begin{proof}
 Combining Properties~\ref{prop:0tree} and~\ref{prop:0fractal}, this proposition holds.
\end{proof}

As shown in Fig.~\ref{fig:3DBakermap8},
all non-zero nodes of $\s_0=\{(0, 0, 0)^\intercal, (0, 0, 1)^\intercal, \cdots, (0, 0, 7)^\intercal \}$ compose two nodes (it can be seen as a tree isomorphic to $\mathcal{T}_0$) and one tree isomorphic to $\mathcal{T}_1$, which agrees with Property~\ref{prop:0tree}.
Then, the leaf nodes of the tree composed of all nodes in $\s_0$ is the root of a tree isomorphic to $\mathcal{T}_3$, which is precisely what Property~\ref{prop:0fractal} describes.
And the functional graph $\mathbb{F}_{\rm H, 3}$ is isomorphic to a tree composing of one sub-trees isomorphic to $\mathcal{T}_4$, and two sub-trees isomorphic to $\mathcal{T}_3$ as Proposition~\ref{prop:funMap} depicts.

\subsection{The functional graph of HDBM in a floating-point arithmetic domain}

Following the IEEE 754 standard, the most common technical standard for floating-point arithmetic, a sequence of $n$ bits, $\{b(i)\}_{i=0}^{n-1}$, is divided into three parts: a sign, a signed exponent and a mantissa \cite{754-2019}.
Let the length of exponent be $l$, and the length of mantissa be $m_f$; then the trailing significand field digit string is a sequence of $m_f$ bits $(a_1a_2 \cdots a_{m_f})_2$.
A number in the floating-point arithmetic domain is interpreted as
\begin{equation*}
\begin{split}
&v= \\
 &\begin{cases}
 0   &  \mbox{if } p=0, o=0;\\
 (-1)^s\cdot\left(\sum\limits_{i=1}^{m_f}b_{l + i}\cdot 2^{-i} \right)\cdot 2^{2-2^{l-1}} 
     &   \mbox{if }p=0, o\neq0;\\
 (-1)^s\cdot \infty      & \mbox{if } p=2^{l}-1, o = 0;\\
 \text{``not a number"} & \mbox{if } p=2^{l}-1, o \neq 0;\\
 (-1)^s\cdot\left(1+\sum\limits_{i=1}^{m_f}b_{l + i}\cdot 2^{-i} \right)\cdot 2^{p-o} & \mbox{otherwise},
 \end{cases}
\end{split}
\end{equation*}
where $s=b_0$,
$p=\sum_{i=0}^{l-1}b_{i+1}\cdot 2^i$, $o=2^{l-1}-1$. 
As for the floating-point arithmetic domain with parameters $l$ and $m_f$, the domain and range of HDBM~(\ref{eq:BH}) are both discrete set
$\{{\bf x} \mid x_1=\frac{X_1}{2^e}, x_2=\frac{X_2}{2^e}, \cdots,
x_m=\frac{X_m}{2^e}, X_i\in\mathbb{Z}'\}$, where $\mathbb{Z}'=\{X' \mid X'=\mathrm{R_f}(X), X\in\mathbb{Z}_{2^{e_{\rm m}}}\}$, $e_{\rm m}=m_f+2^{l-1}-2$, 
\begin{equation}
\label{eq:float:Rf}
\mathrm{R_f}(X) =X -
\begin{cases}
 X \bmod 2^{e_x- m_f}
 & \mbox{if } e_x\in[m_f, e_{\rm m}) \\
 0
 & \mbox{if } e_x \in [0, m_f)
\end{cases}
\end{equation}
denoting the quantization function exerting in the floating-point arithmetic domain hereinafter,
where
$e_x=\max\{e \mid X\geq 2^e\}$.
Hence, HDBM~(\ref{eq:BH}) can be represented as
\begin{equation}
\label{eq:B_Hf}
\mathcal{B}_{\rm f}(\X)= \mathrm{R_f}(2^{e_{\rm m} }\cdot \mathcal{B}(\X/2^{e_{\rm m}}))
\end{equation}
when it is implemented in the floating-point arithmetic domain.
For any $\X$, as $X_i\in\mathbb{Z}'$, one has 
\begin{multline}
X_i= \\
\begin{cases}
\left(1+\sum\limits_{i=1}^{m_f}a_i\cdot 2^{-i}\right)\cdot 2^{e_{x_i}}
 & \mbox{if } e_{x_i}\in[m_f, e_{\rm m}); \\
 \left(\sum\limits_{i=1}^{m_f}a_i\cdot 2^{-i}\right)\cdot 2^{m_f}
 & \mbox{if } e_{x_i} \in [0, m_f),
\end{cases} 
\label{eq:X_if}
\end{multline}
where $a_i$ is the $i$-th most significant bit of the mantissa of $X_i$.
So, it follows from $\Y=\mathcal{B}_{\rm f}(\X)$ that the relation between $Y_i$ and $\{a_i\}_{i=1}^{m_f}$ corresponding to $X_i$
can be divided into three cases concerning $i$ and $m$:
\begin{itemize}
\item $i\neq m$: it follows from Eq.~\eqref{eq:B_Hf} that $Y_i=2 X_i - 2^{e_{\rm m}}\beta(k,i)$. Then, according to the definition of $\beta(k,i)$, one has
\begin{equation*}
\beta(k,i)=
\begin{cases}
 1 & \mbox{if } e_{x_i}=e_{\rm m} -1; \\
 0 & \mbox{otherwise}.
\end{cases}
\end{equation*}
It means from Eq.~\eqref{eq:X_if} that
\begin{equation}
\label{eq:Yif}
Y_i=
 \begin{cases}
 \left(\sum\limits_{i=1}^{m_f}a_i \cdot 2^{-i}\right)\cdot 2^{e_{\rm m}}
 & \mbox{if } e_{x_i}=e_{\rm m} -1;\\
 \left(1+\sum\limits_{i=1}^{m_f}a_i \cdot 2^{-i}\right)\cdot 2^{ e_{x_i}+1}
 &\mbox{if } e_{x_i}\in[m_f, e_{\rm m}-1);\\
 \left(\sum\limits_{i=1}^{m_f}a_i \cdot 2^{-i}\right)\cdot 2^{m_f+1}
 & \mbox{if } e_{x_i} \in [0, m_f).
\end{cases}
\end{equation}

\item $i=m$ and $m-1 \leq m_f$: one has
\begin{equation}\label{eq:Y_mf'}
 Y_m=\mathrm{R_f}(Y'_m+k\cdot 2^{e_{\rm m}-m+1})
\end{equation}
and
\begin{IEEEeqnarray*}{rCl}
\IEEEeqnarraymulticol{3}{l}{
Y'_m = \mathrm{R_f}(X_m / 2^{m - 1})}\nonumber\\* \quad 
& = & 
\begin{cases}
\left(1+\sum\limits_{i=1}^{\tilde{e}_{x_m}}a_i \cdot 2^{-i}\right) \cdot 2^{\hat{e}_{x_m}}
 & \mbox{if } e_{x_m}\in[m_f, e_{\rm m});\\
 \left(\sum\limits_{i=1}^{\hat{m}_f}a_i \cdot 2^{-i}\right)\cdot 2^{\hat{m}_f}
 & \mbox{if }e_{x_m} \in [0, m_f),
\end{cases}
\label{eq:Y'm}
\end{IEEEeqnarray*}
where $\tilde{e}_{x_m}=\min(\hat{e}_{x_m}, m_f)$, 
$\hat{e}_{x_m}=e_{x_m}-m+1$ and $\hat{m}_f=m_f-m+1$.
\item $i=m$ and $m-1 > m_f$: 
the analysis of $Y_m$ is more simple than that on the second case
and omitted here.
\end{itemize}

Let $\mathbb{F}_{\rm H,f}$ denote the functional graph of $\mathcal{B}_{\rm f}(\X)$. Similar to $\mathbb{F}_{\rm H, e}$, every trajectory on $\mathbb{F}_{\rm H, f}$ satisfies Property~\ref{prop:float_0}. 
And this property is more obvious in Fig.~\ref{fig:float32}, which is the functional graph when $m=3$, $l= 3$, $m_f=1$. 
The in-degree of a non-leaf node $\Y$ in $\mathbb{F}_{\rm H, f}$ is decided by the scope of $Y_m$ as shown in Properties~\ref{prop:dYm2} and~\ref{prop:dYm1g}.

\begin{Property}
\label{prop:float_0}
Any trajectory in $\mathbb{F}_{\rm H, f}$ eventually evolve into fixed-point ${\bf 0}$.
\end{Property}
\begin{proof}
From Eqs.~\eqref{eq:X_if} and~\eqref{eq:Yif}, one can get $e_{y_i} = e_{x_i} + 1$ when $i\neq m$ and $e_{x_i}\neq e_{\rm m}-1$; the mantissa of $Y_i$ is $a_{j+1}\cdots a_{m_f}\underbrace{0000}_j,$ 
when $i\neq m$ and $e_{x_i}= e_{\rm m}-1$, where $j=\min\{i \mid a_i=1\}$.
So, there exists a positive integer $n_1$ satisfying
$\tilde{\X}^{(n_1)}_{\rm f}={\bf 0}$, where $\X^{(n_1)}_{\rm f}=\mathcal{B}_{\rm f}^{n_1}(\X)$.
When $n \geq n_1$, one can get $\tilde{\X}^{(n)}_{\rm f}= {\bf 0}$, then there exists a positive number $n_2$ satisfying $ {\X}^{(n_2)}_{\rm f} ={\bf 0}$ by Eq.~\eqref{eq:B_Hf}.
The trajectory in $\mathbb{F}_{\rm H, f}$ starting from initial point $\X$ eventually enters fixed-point ${\bf 0}$.
\end{proof}

As any node $\Y$ in $\mathbb{F}_{\rm H, f}$ represents an interval, the property of $\mathbb{F}_{\rm H, f}$ can be analyzed by interval arithmetic as \cite{Nathalie:NumericalRe:TC2014}.
And the unit in the floating-point arithmetic domain corresponding to $\Y$ in $\mathcal{B}_{\rm f}$ can be defined as Definition~\ref{def:cellf} by interval arithmetic.
\begin{Definition}
\label{def:cellf}
Define $\prod \limits_{i=1}^m\mathbb{Y}_{i}$ as a cube $\mathbb{Y}_{\rm f}$ for any given ${\bf Y}$, where \begin{equation}
\label{eq:float:IYi}
\mathbb{Y}_{i}=
\begin{cases}
 [Y_i, Y_i+2^{e_{y_i}-m_f}) & \mbox{if } e_{y_i}> m_f;\\
 [Y_i, Y_i+1)               & \mbox{if } e_{y_i}\leq m_f.
\end{cases}
\end{equation}
\end{Definition}
Let $\BC = \mathcal{B}_{\rm f}^{-1}(\mathbb{Y})$ with the interval algorithm.
Then, because $\mathrm{R_f}\left(\inf(\mathbb{Y}_i)\right)=\inf(\mathbb{Y}_i)$ and $\mathrm{R_f}\left(\sup(\mathbb{Y}_i)\right)=\sup(\mathbb{Y}_i)$, when $i \neq m$, one has
\begin{multline}
\label{eq:float:I'Y_i}
 \BC_i=
 \left[\frac{Y_i+\beta(k,i) 2 ^ {e_{\rm m}} }{2}, \frac{Y_i+\beta(k,i) 2 ^ {e_{\rm m}}+|\mathbb{Y}_{i}| }{2} \right);
\end{multline}
when $i = m$, one has
\begin{multline}
\label{eq:float:I'Y_m}
\BC_m=[2^{m-1}Y_m - k\cdot 2^{e_{\rm m}},\\
2^{m-1}(Y_m+|\mathbb{Y}_{m}|) - k\cdot 2^{e_{\rm m}}).
\end{multline}
The in-degree of a node $\Y$ in $\mathbb{F}_{\rm H, f}$ satisfies
$$d_{\bf Y}=|\{\X \mid \X \in \BC, \X\in \mathbb{Z}'\}|.$$
And the in-degree of $Y_i$ is
\begin{equation}
\label{eq:dYiXif}
d_{y_{i}}=
 |\{X_i \mid X_i\in \BC_i, X_i\in\mathbb{Z}'\}|
\end{equation}
in $\mathbb{F}_{\rm H, f}$.
According to HDBM~(\ref{eq:BH}), for any node $\Y$, there is $k$ such that $\Y\in C'_k$. So $d_{y_{i}}$ is independent for any $i$. It means that the in-degree of $\Y$ satisfies 
\begin{equation}\label{eq:f:dYs}
 d_{\Y}=\prod_{i=1}^{m}d_{y_i}.
\end{equation}
Then, node $\Y$ is the leaf node of $\mathbb{F}_{\rm H, f}$ if there exists $i$ satisfying $d_{y_{i}} = 0$.
Property~\ref{prop:dYi} presents the relation between the in-degree of node $\Y$ and that of $Y_m$ if $\Y$ is a non-leaf node.

\begin{Property}
\label{prop:dYi}
In $\mathbb{F}_{\rm H, f}$, the in-degree of any non-leaf node $\Y$ is equal to that of $Y_m$, namely $d_{\Y}=d_{y_m}$.
\end{Property}
\begin{proof}
 For any non-leaf node $\Y$, there exists 
$k$ satisfying $\Y\in D'_k$. 
When $i\neq m$, the analysis of $d_{y_i}$ is divided into two case with different $\beta(k,i)$:
\begin{itemize}
\item $\beta(k,i)=0$: there is $\BC_i=\left[\frac{Y_i }{2}, \frac{Y_i+ |\mathbb{Y}_{i}|}{2} \right)$ by Eq.~\eqref{eq:float:I'Y_i}.
Then, referring to $\Y$ is a non-leaf node, $Y_i$ is an even and $X_i=\frac{Y_i }{2}$.
It follows from Eq.~\eqref{eq:float:IYi} that $\BC_i= 1$.
So, $d_{y_i}= 1$ by Eq.~\eqref{eq:dYiXif} in $\mathbb{F}_{\rm H, f}$.
\item $\beta(k,i)=1$: there is
$$\BC_i=\left[\frac{Y_i+2 ^ {e_{\rm m}}}{2}, \frac{Y_i+ 2 ^ {e_{\rm m}}+|\mathbb{Y}_{i}|}{2} \right)$$
by Eq.~\eqref{eq:float:I'Y_i}.
As $\Y$ is a non-leaf node, $Y_i$ is an even and $X_i=\frac{Y_i+2 ^ {e_{\rm m}}}{2}$.
It follows from Eq.~\eqref{eq:float:IYi} that $\BC_i \subset \mathbb{X}_{i}$. So, $d_{y_i}= 1$ by Eq.~\eqref{eq:B_He}.
\end{itemize}
So, it follows from Eq.~\eqref{eq:f:dYs} that
\begin{equation}\label{eq:df}
 d_{\Y}=\prod_{i=1}^{m}d_{y_i}=d_{y_m}
\end{equation}
when $i\neq m$.
\end{proof}

\begin{figure}[!htb]
 \centering
 \begin{minipage}{1.0\BigOneImW}
 \includegraphics[width=1.0\BigOneImW]{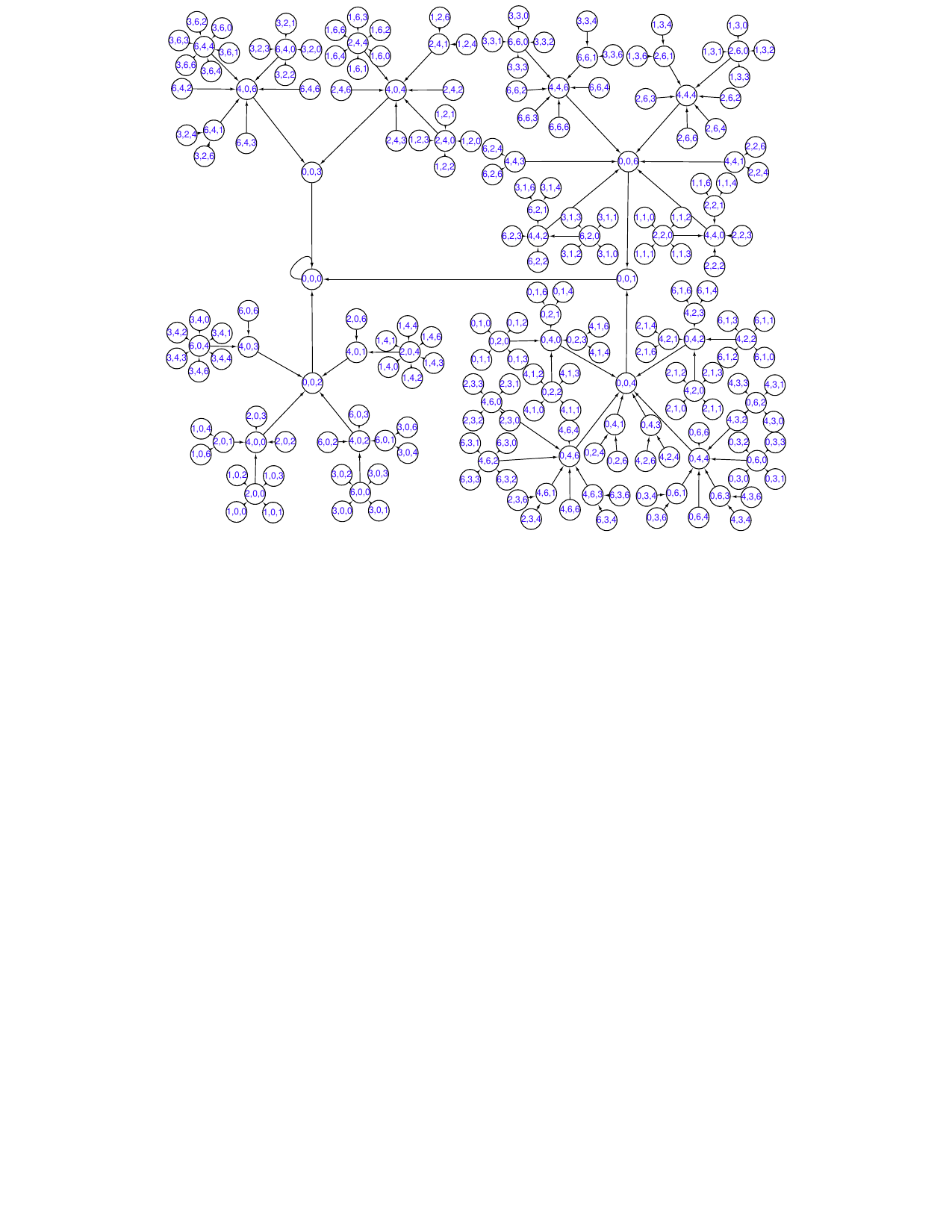}
 \end{minipage}
 \caption{The functional graph based on $\mathcal{B}_{\rm f}$ with $m=3$, $l= 3$, $m_f=1$.}
\label{fig:float32}
\end{figure}

\begin{Property}
\label{prop:dYm2}
When $\Y\in D'_0$ is a non-leaf node in $\mathbb{F}_{\rm H, f}$, its in-degree is
\begin{equation}
\label{eq:f:dXm}
 d_{\Y}= \\
 \begin{cases}
 1 & \mbox{if } e_{y_m} \in [m_f, e_{\rm m} - m +1); \\
 2^{e_{y_m}-\hat{m}_f} & \mbox{if } e_{y_m} \in [\hat{m}_f + 1, m_f);\\
 2^{m-1} & \mbox{if } e_{y_m} < \hat{m}_f+1,
 \end{cases}
\end{equation}
where $e_{y_m} = \max\{e \mid Y_m\geq 2^e\}$ and $\hat{m}_f=m_f-m+1$.
\end{Property}
\begin{proof}
When $\Y\in D'_0$, referring to Eqs.~\eqref{eq:B_Hf} and~\eqref{eq:float:IYi}, one has $e_{y_m}\leq e_{\rm m}-m+1$ and
\begin{equation}
    \label{eq:D0dYm}
 \BC_m= [2^{m-1}Y_m, 2^{m-1}Y_m+2^{m-1}|\mathbb{Y}_{m}|),
\end{equation}
where 
\begin{equation}
\label{eq:|Y|}
|\mathbb{Y}_{m}|=
 \begin{cases}
 2^{e_{y_m}-m_f} &  \mbox{if } e_{y_m}> m_f;\\
 1               &  \mbox{if } e_{y_m}\leq m_f.
\end{cases}
\end{equation}
Because $Y_m \in \mathbb{Z}'$, one has 
\begin{equation}\label{eq:Ymmax}
 Y_m\leq 2^{e_{y_m}+1} - |\mathbb{Y}_{m}|.
\end{equation}
And it follows from Eq.~\eqref{eq:Y_mf'} that $X_m = 2^{m-1}Y_m$ when $\Y\in D'_0$. Then it follows from Eq.~\eqref{eq:float:IYi} that 
\begin{equation*}
|\mathbb{X}_{m}| =
    \begin{cases}
     2^{e_{y_m}- \hat{m}_f}   &  \mbox{if } e_{y_m}\in [m_f-m+2, {e_{\rm e}- m});\\
     1   &  \mbox{if } e_{y_m} <{m_f-m+2},
    \end{cases}
\end{equation*}
for any $X_m\in\BC_m$. Then, it follows from Eq.~\eqref{eq:dYiXif} that 
\begin{equation}
 \label{eq:float:H:dyi}
 d_{y_m} = 
 \begin{cases}
 \frac{|\BC_m|}{2^{e_{y_m}-\hat{m}_f}} &\mbox{if } e_{y_m}\in [\hat{m}_f+1, {e_{\rm m}- m + 1});\\
 |\BC_m| &\mbox{if } e_{y_m} \in [0,{\hat{m}_f+1}),
 \end{cases}
\end{equation}
where $e_{x_m} = \max\{e \mid X_m\geq 2^e\}$.
So, according to Property~\ref{prop:dYm2}, above equation and Eq.~\eqref{prop:dYm2}, Eq.~\eqref{eq:f:dXm} holds.
\end{proof}

\begin{Property}
\label{prop:dYm1g}
When $\Y\in D'_k(k\neq 0)$ is a non-leaf node in $\mathbb{F}_{\rm H, f}$, its in-degree is
\begin{equation}\label{eq:float:dYm}
 d_{\Y}=
 \begin{cases}
  2^{{i}_k + e_m - e_{x'_m}}
 &\mbox{if } X'_m \neq 0,\\
2^{m_f}(i_k + e_m - 2m_f +1) 
 & \mbox{if } X'_m = 0, 
 \end{cases}
\end{equation}
where $i_k=\max\{e \mid k\geq 2^e\}$, $e_{x'_m}=\max\{e \mid X'_m\geq 2^e\}$ and $X'_m= 2^{m-1}Y_m - k\cdot 2^{e_{\rm m}}$.
\end{Property}
\begin{proof}
When $\Y\in D'_k(k\neq 0)$, referring to Eq.~\eqref{eq:Y_mf'}, one can get $e_{y'_m} < e_m- m+1$, that is, 
\begin{equation}
\label{eq:3:eymk}
e_{y_m}=i_k + e_m- m+1.
\end{equation} Then, it follows from Eqs.~\eqref{eq:float:IYi} and~\eqref{eq:float:I'Y_m} that 
\begin{equation*}
 |\mathbb{Y}_{m}|=
 2^{i_k + e_m - m_f - m+1}
\end{equation*}
and
\begin{multline*}
\BC_m=
 [2^{m-1}Y_m - k\cdot 2^{e_{\rm m}},\\ 2^{m-1}Y_m+2^{i_k + e_m - m_f} - k\cdot 2^{e_{\rm m}}).
\end{multline*}
Referring to Eq.~\eqref{eq:3:eymk} and $Y_m = 0\pmod{|\mathbb{Y}_{m}|}$, one can get $X'_m = 0\pmod{2^{i_k + e_m - m_f}}$. Let $X'_m = C2^{i_k + e_m - m_f}$, one has
\begin{equation}\label{eq:DkdYm}
\BC_m = [C2^{i_k + e_m - m_f}, (C+1)2^{i_k + e_m - m_f}),
\end{equation}
where $C$ is a constant.
So, one can deduce the value of $d_{\Y}$ by two cases of $X'_m$:
\begin{itemize}
 \item $X'_m \neq 0$: it means $C\neq 0$ and $e_{x'_m} \geq i_k + e_m - m_f$. According to the design of floating-point domain, one has $2^{l-1} -2 \geq m_f$, then, it follows 
 \begin{equation}\label{eq:3:em>mf}
 e_m - m_f \geq m_f.
 \end{equation}
 Referring to Eq.~\eqref{eq:DkdYm}, one has
 $$\BC_m \subset[2^{e_{x'_m}}, 2^{e_{x'_m}+1}).$$ 
 It follows from Eq.~\eqref{eq:float:H:dyi} that $d_{\Y} = 2^{i_k + e_m - e_{x'_m}}$.
 
 \item $X'_m = 0$: according to Eq.~\eqref{eq:DkdYm}, one has $\BC_m = [0, 2^{i_k + e_m - m_f})$. Referring to Eq.~\eqref{eq:3:em>mf}, one can get 
 $$\BC=[0, 2^{m_f})\bigcup\left(\bigcup_{i=1}^{i_k + e_m - 2m_f}[2^{m_f+i}, 2^{m_f+i+1})\right)$$
 and it follows from Eq.~\eqref{eq:float:H:dyi} that $d_{\Y}=2^{m_f} \cdot (i_k + e_m - 2m_f+1)$. 
\end{itemize}
Combining the above analysis, Eq.~\eqref{eq:float:dYm} holds. 
\end{proof}

\subsection{Relation between functional graphs in two arithmetic domain}

For a floating-point arithmetic domain, the length of the minimum interval, which is represented by a node, is $\frac{1}{2^{e_{m}}}$. So for a Lipschitz-
continuous function $f:[0,1)\rightarrow [0, 1)$, all nodes in the functional graph $\mathbb{F}_{\rm f}$ with respect of $f$ also exist in the $\mathbb{F}_{\rm e_m}$. 
For any node $Y_{\rm f}$ in the $\mathbb{F}_{\rm f}$, 
there are $|\mathbb{Y}_{\rm f}|$ nodes $\{Y_{\rm e, i}\}_{i=1}^{|\mathbb{Y}_{\rm f}|}$ which is in the $\mathbb{Y}_{\rm f}$. 
Referring to Eq.~\eqref{eq:float:IYi}, one has $\mathbb{Y}_{\rm f} = \bigcup\mathbb{Y}_{\rm e, i}$ and $f^{-1}(\mathbb{Y}_{\rm f}) = \bigcup f^{-1}(\mathbb{Y}_{\rm e, i})$.
For any node $Y_{\rm e, i}$, the in-degree in $\mathbb{F}_{\rm e_m}$ is similar to Eq.~\eqref{eq:dYiXif} that 
\begin{equation}\label{eq:3cde}
\begin{split}
d_{\rm e, i} = &  |\{X \mid X\in f^{-1}(\mathbb{Y}_{\rm e, i}), X\in \mathbf{Z}\}|\\
             = & \CEIL{|f^{-1}(\mathbb{Y}_{\rm e, i})|}.
\end{split}
\end{equation}
Then, the functional graph $\mathbb{F}_{\rm f}$ 
can deduce from $\mathbb{F}_{\rm e_m}$ by a method with two steps as follows: traversing the nodes in $\mathbb{F}_{\rm e_m}$, if node $Y\not\in\mathbf{Z}'$, remove the node and link all preimages of $Y$ to $Y_{\rm f}=\mathrm{R}_{\rm f}(Y)$.

For any $Y_{\rm f}$, similarly to Eq.~\eqref{eq:float:H:dyi}, the in-degree $d_f$ of the node $Y_{\rm f}$ can be expressed by
\begin{equation}\label{eq:3cdf}
d_{f} = 
 \begin{cases}
 \CEIL{\frac{|f^{-1}(\mathbb{Y}_{\rm f})|}{2^{e- m_f}}} &\mbox{if }|f^{-1}(\mathbb{Y}_{\rm f})| \subset [2^e,2^{e+1});\\
 \CEIL{|f^{-1}(\mathbb{Y}_{\rm f})|} &\mbox{if } |f^{-1}(\mathbb{Y}_{\rm f})| \subset [0,2^{m_f}),
 \end{cases}
\end{equation}

\begin{Property}\label{prop:simife}
When $\mathbb{Y}_{\rm f}$ and $f^{-1}(\mathbb{Y}_{\rm f})$ are both in $[0, 2^{m_f})$, the tree composing of a root $Y_{\rm f}$ and its preimage in $\mathbb{F}_{\rm f}$ is isomorphic to the tree composing of root $Y_{\rm f}$ and its preimage in $\mathbb{F}_{\rm e_m}$.
\end{Property}
\begin{proof}
There is $|\mathbb{Y}_{\rm f}| = |\mathbb{Y}_{\rm e}| = 1$. It follows from Eqs.~\eqref{eq:3cde} and ~\eqref{eq:3cdf} that $d_f = \lceil\frac{|f^{-1}(\mathbb{Y}_{\rm f})|}{|\mathbb{Y}_{\rm f}|}\rceil=\lceil|f^{-1}(\mathbb{Y}_{\rm e})|\rceil$ and $d_f= d_{e, 1}$. So, this Property holds.
\end{proof}

When $\mathbb{Y}_{\rm f}$ or $f^{-1}(\mathbb{Y}_{\rm f})$ are not in $[0, 2^{m_f})$, it is possible to calculate $|f^{-1}(\mathbb{Y}_{\rm f})|$ by using a Taylor expansion of $f$. 
For $\mathcal{B}_{\rm e}$ and $\mathcal{B}_{\rm f}$, the order of its Taylor expansion is 1.
According to Properties~\ref{prop:indgreem},~\ref{prop:dYm2} and~\ref{prop:dYm1g}, the ratio of the in-degree of a node $Y_{\rm f}$ in the $\mathbb{F}_{\rm f}$ with it in the $\mathbb{F}_{\rm e_m}$ can be calculated. 
And the disturbance of a given $\x$ can be deduced from Property~\ref{prop:Bf-i=Xm}, as shown in Properties~\ref{prop:errorFI0} and~\ref{prop:errorFI}.

\begin{Property}
\label{prop:Bf-i=Xm}
For any node $\X$ in $\mathbb{F}_{\rm H, f}$, there exists $\X'$ in $\mathbb{F}_{\rm H, e_{\rm m}}$ and $\mathcal{B}_{\rm f}(\X)-\mathcal{B}_{\rm e}(\X')=\Delta_m,$ where $\Delta_m= \mathcal{B}_{\rm f}(\X)_m-\mathcal{B}_{\rm e}(\X')_m$.
\end{Property}
\begin{proof}
First, it follows from Eqs.~(\ref{eq:B_He}) and~\eqref{eq:B_Hf} that $\mathcal{B}_{\rm f}(\X)_{\rm s}=\mathcal{B}_{\rm e}(\X')_{\rm s}$.
Then, according to the definition of $C'_k$, if $\X\in C'_k$ in $\mathbb{F}_{\rm H, f}$, one can get $\X\in C'_k$ in $\mathbb{F}_{\rm H, e_{\rm m}}$. So, one has $\mathcal{B}_{\rm f}(\X)-\mathcal{B}_{\rm e}(\X')=\Delta_m$.
\end{proof}

\begin{Property}
\label{prop:errorFI0}
For any node $\X\in C'_0$ in $\mathbb{F}_{\rm H, f}$, there also exists $\X$ in $\mathbb{F}_{\rm H, e_{\rm m}}$ and 
\begin{equation}
\mathcal{B}_{\rm f}(\X)-\mathcal{B}_{\rm e}(\X)=0.
\end{equation} 
\end{Property}
\begin{proof}
When $\X \in C'_0$, it follows from Eqs.~\eqref{eq:X_if},~\eqref{eq:Y_mf'} and~\eqref{eq:Y'm} that $\mathcal{B}_{\rm e}(\X)_m = \mathrm{R}(X_m/2^{m-1})$ and $\mathcal{B}_{\rm f}(\X)_m = \mathrm{R}_{\rm f}(X_m/2^{m-1})$. 
Referring to Eq.~\eqref{eq:float:Rf} and $X_m\in \mathbb{Z}'$, one has $\mathrm{R}_{\rm f}(X_m/2^{m-1}) = \mathrm{R}(X_m/2^{m-1})$. 
Then, $\Delta_m=\mathcal{B}_{\rm f}(\X)_m-\mathcal{B}_{\rm e}(\X)_m=0$ by Eq.~\eqref{eq:Y_mf'}.
According to Property~\ref{prop:Bf-i=Xm}, one has $\mathcal{B}_{\rm f}(\X)-\mathcal{B}_{\rm e}(\X)=0$.
\end{proof}

\begin{Property}
\label{prop:errorFI}
For any node $\X=(X_1, X_2, \cdots, X_m)^\intercal\in C'_k$ in $\mathbb{F}_{\rm H, f}$, there also exists $\X$ in $\mathbb{F}_{\rm H, e_{\rm m}}$ and
\begin{multline}\label{eq:errorFI}
\mathcal{B}_{\rm f}(\X)-\mathcal{B}_{\rm e}(\X) =\\
 \begin{cases}
 \sum\limits_{i=e_{x_m}-\hat{i}_k}^{m_f}a_{i}\cdot 2^{\hat{e}-i}
 & \mbox{if } e_{x_m}\in [\hat{i}_k, e_{\rm m}-1);\\
 2^{m - 1} X_m & \mbox{if } e_{x_m} \in [0, \hat{i}_k),
 \end{cases} 
\end{multline}
where $\hat{i}_k=i_k+e_{\rm m}-m_f$, $k \in \{1, 2, \cdots, 2^{m-1}-1\}$, $i_k= \max\{i \mid k\geq 2^{i}\}$.
\end{Property}
\begin{proof}
First, from Eq.~\eqref{eq:Y_mf'}, one can get
$\mathcal{B}_{\rm f}(\X)_m=\mathrm{R}_{\rm f}\left(Y'_m+k\cdot 2^{e_{\rm m}-m+1}\right)$.
According to Eq.~\eqref{eq:Y'm}, one can get $Y'_m < 2^{e_{\rm m}-m+1}$, namely 
\begin{equation}\label{eq:float:e_ym}
\begin{split}
 e_{y_m}&=\max(e_{y'_m}, e_{\rm m}+i_k-m_f - m+1)\\
 &=\hat{i}_k+m_f - m+1
\end{split}
\end{equation} and
\begin{equation}\label{eq:BHFm}
 \mathcal{B}_{\rm f}(\X)_m\in[k\cdot 2^{e_{\rm m}-m+1}, (k+1)\cdot 2^{e_{\rm m}-m+1}).
\end{equation}
So, it follows from Eqs.~\eqref{eq:Y'm} and~\eqref{eq:BHFm} that
\begin{multline*}
\begin{split}
 & \mathcal{B}_{\rm f}(\X)_m=\\
 & \left(1+\sum_{i=1}^{e_{y'_m}-\hat{i}_k+m-1}a_i\cdot 2^{-i}\right)\cdot 2^{e_{y'_m}}+k\cdot 2^{e_{\rm m}-m+1}.
\end{split}
\end{multline*}
Then, because of Eqs.~\eqref{eq:X_if} and~\eqref{eq:Y'm}, one has $\mathcal{B}_{\rm e}(\X)_m=\mathrm{R}(X_m/2^{m-1})=Y'_m+k\cdot 2^{e_{\rm m}-m+1}$.
So, one can get $\Delta_m$ by $e_{x_m}$ from two cases:
\begin{itemize}
\item $e_{x_m} \in [\hat{i}_k, e_{\rm m})$: one has $e_{y'_m}=\hat{e}_m$. So, 
 \begin{equation*}
 \Delta_m=\sum_{i=e_{x_m}-\hat{i}_k}^{m_f}a_{i}\cdot 2^{\hat{e}-i}.
 \end{equation*}
 \item $e_{x_m} \in [0, \hat{i}_k)$: one has $e_{y'_m}<\hat{i}_k-m+1$, which means $Y_m=k\cdot 2^{e_{\rm m}-m+1}$ and $\Delta_m= Y'_m=2^{m - 1} X_m$.
\end{itemize}
From the above, it can be seen that Eq.~\eqref{eq:errorFI} holds by 
Property~\ref{prop:Bf-i=Xm}.
\end{proof}

\section{Conclusion}

This paper analyzed the structure of the two-dimensional generalized baker's map and 
its higher-dimensional version in a digital domain. 
Then, as $e$ increases, the upper bound
of in-degree of the node in the structure of the generalized baker's map in a digital arithmetic domain was obtained. The regular patterns of the phase space of baker's map implemented on a digital computer were reported, which is dramatically different from that in an infinite-precision domain.
An invariable maximum in-degree exists, no matter what the implementation precision is. The in-degree distribution approaches a constant as precision increases.  The functional graph in a fixed-precision arithmetic domain exhibits self-similarity, discernible through fractal dimensions, with a specific case evolving into a fractal pattern as precision augments. Comparatively, the properties of the functional graph in the floating-point arithmetic domain diverge from those in the fixed-point arithmetic domain. Nevertheless, the characteristics of specific local graphs share similarities between the two domains. This analysis method can be extended to the variants of baker's map and other chaotic maps and promote some research based on the floating-point arithmetic domain, such as numerical reproducibility and accuracy loss.

\bibliographystyle{IEEEtran_doi}
\bibliography{Baker}
\graphicspath{{author_figures_pdf/}}

\begin{IEEEbiography}[{\includegraphics[width=1.1in, height=1.25in,clip,keepaspectratio]{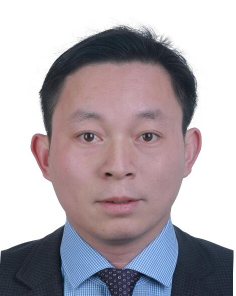}}]{Chengqing Li} (M'07-SM'13) obtained his M.Sc. degree in Applied Mathematics from Zhejiang University, China, in 2005, followed by a Ph.D. in Electronic Engineering from the City University of Hong Kong in 2008. He served as a Post-Doctoral Fellow at The Hong Kong Polytechnic University until September 2010. Subsequently, he joined the College of Information Engineering at Xiangtan University, China. From April 2013 to July 2014, he worked at the University of Konstanz, Germany, supported by the Alexander von Humboldt Foundation. Since April 2018, he has been a Professor at the School of Computer Science and Electronic Engineering, Hunan University, China, and currently holds the position of Full Professor at the School of Computer Science, Xiangtan University, China. His research primarily focuses on the dynamics analysis of digital chaotic systems and their applications in multimedia security. Over the past 21 years, he has published over 70 papers in this area, receiving more than 6100 citations and achieving an h-index of 41. 
He is a Fellow of the IET.
\end{IEEEbiography}

\vskip 0pt plus -1fil

\begin{IEEEbiography}[{\includegraphics[width=1in,height=1.25in,clip,keepaspectratio]{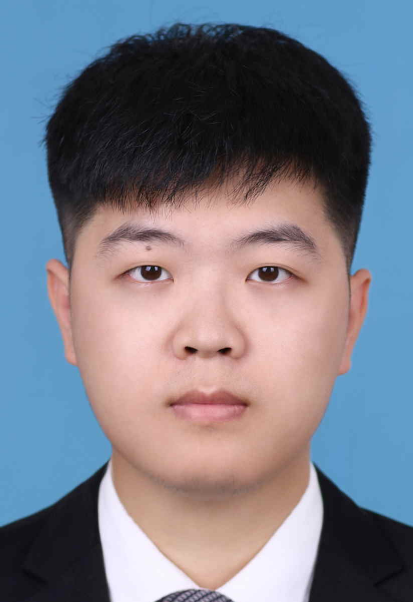}}]{Kai Tan}
received B.Sc. degree in mechanism design, manufacturing and automatization at the School of Mechanical Engineering, Xiangtan University in 2015. He received his M.Sc. degree in computer science at the School of Computer Science, Xiangtan University in 2020.
Now, he is pursing PhD degree at the same school.
His research interests include complex networks and nonlinear dynamics.
\end{IEEEbiography}
\end{document}